\newtheorem{theorem}{Theorem}
\theoremstyle{plain}
\newtheorem{proposition}{Proposition}
\newtheorem{remark}{Remark}
\numberwithin{equation}{section}
\begin{document}
\title[Initial data for the characteristic Einstein-Vlasov system]{The
Goursat problem for the Einstein-Vlasov system: (I) The initial data
constraints}
\author[ C. TADMON]{ \textbf{Calvin TADMON}\\
Department of Mathematics and Computer Science, Faculty of
Science, University of Dschang, P.\ O.\ Box 67, Dschang, Cameroon}
\address{}
\email{}
\date{}
\subjclass[2000]{Primary 82D05, 83C05; Secondary 34A12, 35L70}
\keywords{ Einstein-Vlasov system, Characteristic initial value problem,
Null or characteristic hypersurfaces, Harmonic gauge, Gravitational
potentials, distribution function.}

\begin{abstract}
We show how to assign, on two intersecting null hypersurfaces, initial data
for the Einstein-Vlasov system in harmonic coordinates. As all the
components of the metric appear in each component of the stress-energy
tensor, the hierarchical method of Rendall can not apply strictly speaking.
To overcome this difficulty, an additional assumption have been imposed to
the metric on the initial hypersurfaces. Consequently, the distribution
function is constrained to satisfy some integral equations on the initial
hypersurfaces.
\end{abstract}

\maketitle

\numberwithin{equation}{section}

\section{Introduction}

This work is devoted to the resolution of the constraints problem
associated to the characteristic Einstein-Vlasov (EV) system on
two intersecting null hypersurfaces. The interests and physical
motivations for studying such problems have been widely mentioned
in \cite{2, 4, 5, 6, 10, 12, 13, 14, 17, 18}. It is well known
that the EV system is not an evolution system as it stands. In
order to obtain a hyperbolic system, one needs to impose some
supplementary conditions called gauge conditions which, due to the
deep structure of the system, must satisfy the following
properties:

$\left( i\right) $ whenever these gauge conditions are fulfilled everywhere
in the space-time, the EV system reduces to a non-linear hyperbolic system
called the evolution system.

$\left( ii\right) $ whenever the associated evolution system is satisfied
everywhere in the space-time and the gauge conditions are satisfied on the
null hypersurfaces that carry the initial data, then these gauge conditions
and the complete EV system are satisfied everywhere.

It therefore follows that when the choice of gauge conditions is made, the
initial value problem for the EV system is naturally decomposed into two
parts called the evolution problem and the constraints problem.

The resolution of the evolution problem is equivalent to the resolution of
the reduced non-linear hyperbolic system obtained from the EV system thanks
to the choice of the gauge conditions. Due to the gauge conditions the data
for the reduced EV system can not be given freely. It is necessary to
construct, from arbitrary choice of some components of the gravitational
potentials (called free data) on the initial null hypersurfaces, all the
initial data such that the solution of the reduced EV system with those
initial data satisfies the gauge conditions on the initial null
hypersurfaces. The construction of such data is referred to as the
resolution of the constraints problem. Through all the work we will use
harmonic gauge for the gravitational field.

We now proceed to survey some relatively recent works known about
characteristic initial value problems with initial data prescribed on two
intersecting null hypersurfaces often referred to as the Goursat problems.
In 1990, A. D. Rendall \cite{17} published a $C^{\infty }$ existence and
uniqueness result for quasilinear hyperbolic systems of second order\ with $%
C^{\infty }$ data prescribed on two intersecting null hypersurfaces. Using
the harmonic gauge, the author applied the $C^{\infty }$\ result obtained in
\cite{17}\ to solve the characteristic initial value problem for the
Einstein equations in vacuum and with relativistic perfect fluid source. For
sake of more physical applications, it is known that, for Partial
Differential Equations (PDE), solutions of finite differentiability order
are more important than those of infinite differentiability order. In \cite%
{17} section 7 the author mentioned briefly how results of finite
differentiability order can be obtained for data of finite differentiability
order although proofs were not given. In 1990, H.\ M\"{u}ller zum Hagen \cite%
{14} used Sobolev type inequalities to derive energy inequalities that
enable him solve, in weighted Sobolev space (results of finite
differentiability order), the characteristic initial value problem for
linear hyperbolic systems of second order. He also predicted an existence
and uniqueness result for the quasilinear case. Apart from the fundamental
papers \cite{14} and \cite{17}, some other works on characteristic initial
value problems with initial data prescribed on two intersecting null
hypersurfaces can be found in \cite{2, 3, 8, 9, 10, 12, 18}. As pointed out
by H.\ Andreasson \cite{1}, A. D.\ Rendall \cite{17} and M.\ Fjallborg \cite%
{11}, unlike some known models, the Einstein-Vlasov model has a very nice
feature in General Relativity and Kinetic Theory since the stress-energy
tensor fulfills, without any supplementary assumption, all the physical
necessary energy conditions i.e. the weak energy condition, the dominant
energy condition and the strong energy condition as well as the non-negative
sum pressures condition. This situation, coupled with the importance of
characteristic initial value problems mentioned at the beginning, motivates
us to study the constraints problem associated to the characteristic EV
system. When attempting to solve the constraints problem for the
characteristic EV system by the hierarchical method of Rendall (see \cite%
{17, 18}), a crucial obstacle occurs due to the complicated form of each\
component\ of\ the stress-energy tensor where all the components of the
metric to be constructed appear. The novelty of our work resides in the fact
that we have worked out this difficulty through a supplementary judicious
assumption imposed to the gravitational potentials on the initial
hypersurfaces. As a consequence of the additional assumption on the
gravitational potentials, the distribution function can not be given as free
data, it must satisfy some integral equations. Another advantage of this
paper is that, unlike the work of Rendall \cite{17, 18}, many delicate
calculations and expressions are given in details in such a way that we can
foresee promising resolution of the global characteristic EV system using
for example tools that are similar to those of G.\ Caciotta and\ F.\ Nicolo
\cite{3,4}. To reduce the length of the paper, the evolution problem for the
characteristic EV system is out of the scope of the present work and will be
solved in a forthcoming paper. The paper is organized as follows. In section
2, we give some preliminaries about the EV system. The complete form as well
as the reduced form (in harmonic coordinates) of the EV system are written.
A new form of EV system is derived with appropriate unknowns an variables.
This new form of EV system is suitable for the resolution of the constraints
problem. The concern of section 3 is the resolution of the constraints
problem for the characteristic EV system i.e. the construction of the
initial data for the reduced EV system such that the harmonic gauge
conditions are satisfied on the initial null hypersurfaces. For sake of
simplicity and clarity, only the case of $C^{\infty }$ data will be
discussed. Data of finite differentiability order may be constructed in
Sobolev type spaces using energy inequalities and other classical tools as
described in \cite{9, 10, 14, 17} and references$\ $therein. An appendix D
is provided at the end of the work and is devoted to the treatment of the
constraints integral equations which must be satisfied by the distribution
function. It would be of interest to investigate whether the additional
assumption on free data as well as the constraints integral equations have a
particular physical meaning.

\section{The Einstein-Vlasov (EV) system}

\subsection{The complete form of the EV system}

The geometric framework is a four dimensional differentiable manifold $%
\mathcal{M}$, endowed with a hyperbolic metric $\widehat{g}$ of signature $%
-+++$. The manifold $\left( \mathcal{M},\widehat{g}\right) $ is called a
space-time. $\mathcal{M}$ is assumed to be orientable and of class $%
C^{\infty }$. Throughout the remainder of the work, commas will be used to
denote partial derivatives e.g. $\widehat{g}_{ij,k}=\frac{\partial \widehat{g%
}_{ij}}{\partial y^{k}}$. Roman indices $i,j,...$ run from $1$ to $4$ while
Greek ones $\alpha ,\beta ,...$\ run from $3$ to $4.$ Einstein convention on
repeated indices is used i.e $A_{i}B^{i}=\underset{i}{\sum }A_{i}B^{i}$. The
Einstein-Vlasov system is written as follows (see \cite{1, 7, 11, 19})

\begin{equation}
\begin{array}{l}
\widehat{S}_{ij}\equiv \widehat{R}_{ij}-\frac{1}{2}\widehat{R}\widehat{g}%
_{ij}=\widehat{T}_{ij}, \\
q^{i}\frac{\partial f}{\partial y^{i}}-\widehat{\Gamma }_{jk}^{i}q^{j}q^{k}%
\frac{\partial f}{\partial q^{i}}=0,%
\end{array}
\tag{2.1}  \label{5.1}
\end{equation}%
where $\widehat{g}_{ij}$ are the covariant components of the metric $%
\widehat{g}$. They constitute the unknowns for the Einstein equations. $%
\widehat{R}_{ij}$ are the covariant components of the Ricci tensor and $%
\widehat{R}$ is the scalar curvature of the metric $\widehat{g}$. In the
local coordinates $\left( y^{i}\right) $ they are given as follows

\begin{equation}
\widehat{R}_{ij}=\widehat{R}_{ikj}^{k}=\widehat{\Gamma }_{ij,k}^{k}-\widehat{%
\Gamma }_{ik,j}^{k}+\widehat{\Gamma }_{kl}^{k}\widehat{\Gamma }_{ij}^{l}-%
\widehat{\Gamma }_{jl}^{k}\widehat{\Gamma }_{ik}^{l},\quad \widehat{R}=%
\widehat{g}^{ij}\widehat{R}_{ij},  \tag{2.2}  \label{5.2}
\end{equation}%
where $\widehat{\Gamma }_{ij}^{k}$ are the Christoffel symbols of the metric
$\widehat{g}$ i.e.,

\begin{equation}
\widehat{\Gamma }_{jk}^{i}=\frac{1}{2}\widehat{g}^{il}\left( \widehat{g}%
_{lk,j}+\widehat{g}_{lj,k}-\widehat{g}_{jk,l}\right) ,  \tag{2.3}
\label{5.3}
\end{equation}%
$\widehat{g}^{il}$\ are the contravariant components of $\widehat{g}$\ i.e.,%
\textit{\ }%
\begin{equation}
\widehat{g}^{il}\widehat{g}_{lk}=\delta _{k}^{i}=\left\{
\begin{array}{l}
1\text{ for }i=k, \\
0\text{ for }i\neq k.%
\end{array}%
\right.  \tag{2.4}  \label{5.4}
\end{equation}%
$f$\ is the distribution function (or the particle number density function)
which constitutes the unknown for the Vlasov equation. $f$\ is a
non-negative real valued function defined on $F\left( \mathcal{M}\right) $,
where
\begin{equation}
F\left( \mathcal{M}\right) =\underset{y\in \mathcal{M}}{\cup }\left\{
q=\left( q^{i}\right) \in T_{y}:\widehat{g}_{ij}\left( y\right)
q^{i}q^{j}=-m^{2},\text{ }0<q^{1}\right\} ,  \tag{2.5}  \label{5.5}
\end{equation}%
with $T_{y}\equiv T_{y}\mathcal{M}.$ The Vlasov equation symbolizes the
conservation of the number of particles along the trajectories across the
hypersurfaces of $F\left( \mathcal{M}\right) $ in the case where there is no
collision between particles. $\widehat{T}_{ij}$ are the covariant components
of the stress-energy (or energy-momentum) tensor which is the source of the
gravitational field created by the particles. In contravariant components
the stress-energy is defined by the following relation (see \cite{7})

\begin{equation}
\widehat{T}^{ij}\left( y\right) =-\int_{F_{y}}f\left( y,q\right) q^{i}q^{j}%
\frac{\left\vert \widehat{g}\right\vert ^{\frac{1}{2}}}{q_{1}}d^{3}q\text{,}
\tag{2.6}  \label{5.6}
\end{equation}%
where $F_{y}=\left\{ q=\left( q^{i}\right) \in T_{y}:\widehat{g}_{ij}\left(
y\right) q^{i}q^{j}=-m^{2},\text{ }0<q^{1}\right\} $, $d^{3}q=dq^{2}\wedge
dq^{3}\wedge dq^{4}$, $\left\vert \widehat{g}\right\vert $ is the modulus of
the determinant of $\left( \widehat{g}_{ij}\right) $.

\subsection{The reduced EV system}

The Einstein equations as they stand are not hyperbolic but in harmonic
coordinates they read (see \cite{7})

\begin{equation}
\widehat{R}_{ij}^{h}=\widehat{T}_{ij},  \tag{2.7}  \label{5.7}
\end{equation}%
\ where
\begin{equation}
\widehat{R}_{ij}^{h}\equiv \widehat{R}_{ij}-\frac{1}{2}\left( \widehat{g}%
_{ik}\widehat{\Gamma }_{,j}^{k}+\widehat{g}_{jk}\widehat{\Gamma }%
_{,i}^{k}\right) =-\frac{1}{2}\widehat{g}^{km}\widehat{g}_{ij,mk}+Q_{ij}.
\tag{2.8}  \label{5.8}
\end{equation}%
Here $Q_{ij}$\ is a rational function depending on the metric components and
their first order derivatives (see \cite{10, 20}),
\begin{equation}
\widehat{\Gamma }^{k}=\widehat{g}^{ij}\widehat{\Gamma }_{ij}^{k}.  \tag{2.9}
\label{5.9}
\end{equation}%
So the reduced EV system reads%
\begin{equation}
\begin{array}{l}
-\frac{1}{2}\widehat{g}^{km}\widehat{g}_{ij,mk}+Q_{ij}=\widehat{T}_{ij}, \\
q^{i}\frac{\partial f}{\partial y^{i}}+Q^{i}\frac{\partial f}{\partial q^{i}}%
=0,%
\end{array}
\tag{2.10}  \label{5.10}
\end{equation}%
where
\begin{equation}
Q^{i}=-\widehat{\Gamma }_{jk}^{i}q^{j}q^{k}.  \tag{2.11}  \label{5.11}
\end{equation}

\subsection{Appropriate unknowns and variables}

As a relativistic speed is bounded, we think that it is convenient to choose
on the mass shell, coordinates with bounded domain (see also \cite{7}). Let $%
y\in \mathbb{U}$, $\mathbb{U}$ is the domain of a local chart in $\mathcal{M}
$. Set $w^{A}=\frac{q^{A}}{q^{1}}$, $A=2,3,4$ and denote by $M_{y}$ the
image in $\mathbb{R}^{3}$ of $F_{y}$ by the mapping $\left( q^{i}\right)
\mapsto \left( w^{A}\right) $. Assume the following hyperbolicity conditions
on $\left( \widehat{g}_{ij}\right) .$

\textbf{Assumption }$\widehat{h}$: The metric $\left( \widehat{g}%
_{ij}\right) $ is uniformly hyperbolic and the hypersurfaces $y^{1}=$ $Cte$
are uniformly spatial i.e.

\begin{equation}
\begin{array}{l}
\exists a,b\in \left( 0,\infty \right) :a^{2}\left\vert \xi \right\vert
^{2}\leq \widehat{g}_{AB}\xi ^{A}\xi ^{B}\leq b^{2}\left\vert \xi
\right\vert ^{2}\text{ where }\left\vert \xi \right\vert ^{2}=\overset{4}{%
\underset{A=2}{\sum }}\left( \xi ^{A}\right) ^{2}, \\
-\widehat{g}_{11}\geq a^{2}\text{ and }-\widehat{g}^{11}\geq a^{2}.%
\end{array}
\tag{2.12}  \label{5.12}
\end{equation}

\begin{proposition}
$\left( i\right) $ Under assumption $\left( \ref{5.12}\right) $, the Vlasov
equation reads%
\begin{equation}
q^{i}\frac{\partial f}{\partial y^{i}}+Q^{A}\frac{\partial f}{\partial q^{A}}%
=0.  \tag{2.13}  \label{5.13}
\end{equation}%
\ $\left( ii\right) $ Under assumption $\left( \ref{5.12}\right) $, $M_{y}$\
is a bounded domain in $\mathbb{R}^{3}$\ such that $M_{y}\subset M$, where $%
M $\ is a fixed compact domain in $\mathbb{R}^{3}$. The stress-energy tensor
is given as follows%
\begin{equation}
\widehat{T}^{ij}\left( y\right) =\frac{1}{m^{2}}\int_{M_{y}}f\left(
y,w\right) q^{i}q^{j}\left( q^{1}\right) ^{4}\left\vert \widehat{g}%
\right\vert ^{\frac{1}{2}}d^{3}w,  \tag{2.14}  \label{5.14}
\end{equation}%
where $d^{3}w=dw^{2}\wedge dw^{3}\wedge dw^{4}$, $f\left( y,w\right) $
\textit{is the expression of} $f\left( y,q\right) $ \textit{in the local
coordinates} $\left( y,w\right) $.
\end{proposition}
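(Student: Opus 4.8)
The plan is to prove the two statements in turn, in each case exploiting that Assumption $\widehat{h}$ makes $F_{y}$ a graph over the variables $q^{A}$ (equivalently over $w^{A}$).

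For $(i)$ I would first observe that on $F_{y}$ the variable $q^{1}$ is determined by the rest. Reading $\widehat{g}_{ij}q^{i}q^{j}=-m^{2}$ as a quadratic in $q^{1}$, namely $\widehat{g}_{11}(q^{1})^{2}+2\widehat{g}_{1A}q^{A}q^{1}+(\widehat{g}_{AB}q^{A}q^{B}+m^{2})=0$, the leading coefficient is $<0$ by $(\ref{5.12})$, the discriminant is $>0$, and the product of the roots, $(\widehat{g}_{AB}q^{A}q^{B}+m^{2})/\widehat{g}_{11}$, is $<0$ (numerator $\geq m^{2}>0$, denominator $<0$); hence exactly one root satisfies $q^{1}>0$, and it depends smoothly on $(y^{i},q^{A})$ by the implicit function theorem. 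Thus $f$ restricted to $F(\mathcal{M})$ is a function of $(y^{i},q^{A})$ alone. I would then recall that the geodesic spray $X=q^{i}\partial_{y^{i}}+Q^{i}\partial_{q^{i}}$ is tangent to every level set of $\widehat{g}_{ij}q^{i}q^{j}$ — a one-line computation from $\widehat{\Gamma}_{lm}^{k}\widehat{g}_{kj}=\tfrac12(\widehat{g}_{jm,l}+\widehat{g}_{jl,m}-\widehat{g}_{lm,j})$ after relabelling — so that $Xf$ is independent of the choice of extension of $f$ off $F(\mathcal{M})$. Extending $f$ to be independent of $q^{1}$ then kills the term $Q^{1}\partial_{q^{1}}f$, leaving exactly $(\ref{5.13})$.

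For $(ii)$ the boundedness of $M_{y}$ comes from dividing the mass-shell relation by $(q^{1})^{2}>0$: $\widehat{g}_{11}+2\widehat{g}_{1A}w^{A}+\widehat{g}_{AB}w^{A}w^{B}=-m^{2}/(q^{1})^{2}\leq 0$, so by the lower bound in $(\ref{5.12})$, $a^{2}|w|^{2}\leq \widehat{g}_{AB}w^{A}w^{B}\leq -\widehat{g}_{11}-2\widehat{g}_{1A}w^{A}$. Assumption $\widehat{h}$ also gives uniform bounds $-\widehat{g}_{11}\leq a^{-2}$ and $\sum_{A}\widehat{g}_{1A}^{2}\leq 2b^{2}a^{-2}$ — these follow from the cofactor/Schur identity $1/\widehat{g}^{11}=\widehat{g}_{11}-\widehat{g}_{1A}(\widehat{g}_{AB})^{-1}\widehat{g}_{1B}$ together with $-\widehat{g}^{11}\geq a^{2}$ — so the last inequality becomes $a^{2}|w|^{2}-2\sqrt{2}\,b\,a^{-1}|w|-a^{-2}\leq 0$, which bounds $|w|$ by a constant $R=R(a,b)$. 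Hence $M_{y}\subset M:=\overline{B(0,R)}$, a fixed compact set independent of $y$.

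It remains to deduce $(\ref{5.14})$ from $(\ref{5.6})$ via the substitution $q^{A}=w^{A}q^{1}$, and this change of variables is the main obstacle. On $F_{y}$ (with $m>0$) the relation above makes $q^{1}$ a function of $w$, $(q^{1})^{2}=m^{2}/D$ with $D:=-(\widehat{g}_{11}+2\widehat{g}_{1A}w^{A}+\widehat{g}_{AB}w^{A}w^{B})>0$, hence $\partial_{w^{A}}q^{1}=-\tfrac{q^{1}}{2D}\partial_{w^{A}}D$. The Jacobian matrix is $\partial q^{A}/\partial w^{B}=q^{1}\delta_{B}^{A}+w^{A}\partial_{w^{B}}q^{1}$, and by the rank-one determinant identity $\det(\lambda I+uv^{\top})=\lambda^{2}(\lambda+v^{\top}u)$ its determinant equals $(q^{1})^{2}(q^{1}+w^{A}\partial_{w^{A}}q^{1})$. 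Substituting $\partial_{w^{A}}D$ and using the algebraic identities $2D-w^{A}\partial_{w^{A}}D=-2(\widehat{g}_{11}+\widehat{g}_{1A}w^{A})=-2q_{1}/q^{1}$ and $D=m^{2}/(q^{1})^{2}$ collapses this to $-q_{1}(q^{1})^{4}/m^{2}$; since $q_{1}=\widehat{g}_{1j}q^{j}<0$ for future-directed $q$ (as $\widehat{g}_{11}<0$), this determinant is positive and $d^{3}q=(-q_{1}(q^{1})^{4}/m^{2})\,d^{3}w$. Feeding this into $(\ref{5.6})$, the factor $-1/q_{1}$ cancels against $-q_{1}$ and leaves precisely $\widehat{T}^{ij}(y)=\tfrac{1}{m^{2}}\int_{M_{y}}f(y,w)q^{i}q^{j}(q^{1})^{4}|\widehat{g}|^{1/2}d^{3}w$. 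The hard part is thus this Jacobian simplification together with correctly tracking the sign of $q_{1}$; the two boundedness statements are comparatively routine.
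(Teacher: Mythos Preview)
Your argument is correct and essentially reconstructs the standard Choquet--Bruhat computation; the paper itself does not give an independent proof here but simply refers to \cite{7}, so there is nothing substantive to compare against within the paper. Your treatment of the Jacobian $d^{3}q=(-q_{1}(q^{1})^{4}/m^{2})\,d^{3}w$ via the rank-one determinant identity, and the cancellation against the $-1/q_{1}$ in $(\ref{5.6})$, is exactly the mechanism behind $(\ref{5.14})$.

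One small point worth tightening: your justification that $q_{1}<0$ ``as $\widehat{g}_{11}<0$'' is a little brisk. The clean way is to note that on the positive sheet of the mass shell the quadratic $\widehat{g}_{11}(q^{1})^{2}+2\widehat{g}_{1A}q^{A}q^{1}+(\widehat{g}_{AB}q^{A}q^{B}+m^{2})=0$ selects $q^{1}=\bigl(-\widehat{g}_{1A}q^{A}-\sqrt{\Delta}\bigr)/\widehat{g}_{11}$ with $\Delta=(\widehat{g}_{1A}q^{A})^{2}-\widehat{g}_{11}(\widehat{g}_{AB}q^{A}q^{B}+m^{2})>0$, whence $q_{1}=\widehat{g}_{11}q^{1}+\widehat{g}_{1A}q^{A}=-\sqrt{\Delta}<0$; this makes the sign (and hence the positivity of the Jacobian) unambiguous rather than heuristic. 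Similarly, in your uniform bound on $M_{y}$ the inequality $\sum_{A}\widehat{g}_{1A}^{2}\leq 2b^{2}a^{-2}$ follows from the Schur identity together with the eigenvalue bound $(\widehat{g}_{AB})^{-1}\geq b^{-2}I$, and in fact the factor $2$ is not needed; the conclusion $M_{y}\subset M$ is unaffected.
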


\begin{proof}
See\textbf{\ }\cite{7}.
\end{proof}

\begin{remark}
\textit{In the expression }$\left( \ref{5.14}\right) $\textit{\ of the\
stress-energy tensor, we would like to write }$f\left( y,w\right)
q^{i}q^{j}\left( q^{1}\right) ^{4}$\textit{\ as }$\varphi \left( y,w\right)
w^{ij},$\textit{\ where }$w^{ij}$\textit{\ does not depend on} $\left(
\widehat{g}_{ij}\right) $.\textit{\ To do so, we proceed to the following
change of the unknown distribution function by setting }$f\left( y,w\right)
=\varphi \left( y,w\right) \left( q^{1}\right) ^{-6}$.\textit{\ So we must
have }$w^{ij}$\textit{\ }$=\frac{q^{i}q^{j}}{\left( q^{1}\right) ^{2}}.$
\end{remark}

\begin{proposition}
\textit{Under the change }$f\left( y,w\right) =\varphi \left( y,w\right)
\left( q^{1}\right) ^{-6}$,\textit{\ the stress-energy tensor is given as
follows}

\begin{equation}
\widehat{T}^{ij}\left( y\right) =\frac{1}{m^{2}}\int_{M_{y}}\varphi \left(
y,w\right) w^{ij}\left\vert \widehat{g}\right\vert ^{\frac{1}{2}}d^{3}w\text{%
,}  \tag{2.15}  \label{5.16}
\end{equation}%
\textit{where}%
\begin{equation*}
w^{ij}=\frac{q^{i}q^{j}}{\left( q^{1}\right) ^{2}}.
\end{equation*}%
\textit{The Vlasov equation becomes}

\begin{equation}
q^{i}\frac{\partial \varphi }{\partial y^{i}}+\frac{1}{q^{1}}\left(
Q^{A}-w^{A}Q^{1}\right) \frac{\partial \varphi }{\partial w^{A}}-\frac{6}{%
q^{1}}Q^{1}\varphi =0.  \tag{2.16}  \label{5.17}
\end{equation}
\end{proposition}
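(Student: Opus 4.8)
The first assertion requires only a substitution. The plan is to insert $f\left( y,w\right) =\varphi \left( y,w\right) \left( q^{1}\right) ^{-6}$ into the expression $(\ref{5.14})$ of the stress-energy tensor and collect the powers of $q^{1}$: since $\left( q^{1}\right) ^{-6}\left( q^{1}\right) ^{4}=\left( q^{1}\right) ^{-2}$, one gets $f\,q^{i}q^{j}\left( q^{1}\right) ^{4}=\varphi\, \dfrac{q^{i}q^{j}}{\left( q^{1}\right) ^{2}}=\varphi\, w^{ij}$, which is precisely $(\ref{5.16})$; the domain $M_{y}$ and the measure $d^{3}w$ are untouched.

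For the Vlasov equation I would proceed as follows. Denote by
\begin{equation*}
L:=q^{i}\frac{\partial }{\partial y^{i}}+Q^{A}\frac{\partial }{\partial q^{A}}
\end{equation*}
the operator appearing in $(\ref{5.13})$, viewed as a first-order derivation in the variables $\left( y^{i},q^{A}\right) $, the remaining component $q^{1}=q^{1}\left( y,q^{A}\right) $ being the function fixed by the mass-shell relation $\widehat{g}_{ij}q^{i}q^{j}=-m^{2}$, $q^{1}>0$. The crucial point is the identity $L\left( q^{1}\right) =Q^{1}$, which expresses that the geodesic spray $q^{i}\frac{\partial }{\partial y^{i}}+Q^{i}\frac{\partial }{\partial q^{i}}$ is tangent to the mass shell: differentiating $\widehat{g}_{ij}q^{i}q^{j}+m^{2}=0$ along the spray and using $\widehat{g}_{ij}\widehat{\Gamma }_{kl}^{i}=\frac{1}{2}\left( \widehat{g}_{jk,l}+\widehat{g}_{jl,k}-\widehat{g}_{kl,j}\right) $, the terms of degree three in $q$ cancel; equivalently, $L\left( q^{1}\right) =Q^{1}$ is exactly the fact that makes the full Vlasov equation equivalent to $(\ref{5.13})$. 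Since moreover $L\left( y^{i}\right) =q^{i}$ and $L\left( q^{A}\right) =Q^{A}$ are immediate, the change of momentum variable $q^{A}\mapsto w^{A}=q^{A}/q^{1}$ gives, $L$ being a derivation,
\begin{equation*}
L=q^{i}\frac{\partial }{\partial y^{i}}+L\left( w^{A}\right) \frac{\partial }{\partial w^{A}},\qquad L\left( w^{A}\right) =\frac{L\left( q^{A}\right) }{q^{1}}-\frac{q^{A}}{\left( q^{1}\right) ^{2}}L\left( q^{1}\right) =\frac{1}{q^{1}}\left( Q^{A}-w^{A}Q^{1}\right) .
\end{equation*}

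It then remains to conjugate $L$ by the weight $\left( q^{1}\right) ^{-6}$. From $f=\varphi \left( q^{1}\right) ^{-6}$, $L\bigl( \left( q^{1}\right) ^{-6}\bigr) =-6\left( q^{1}\right) ^{-7}L\left( q^{1}\right) =-6\left( q^{1}\right) ^{-7}Q^{1}$, and $Lf=0$, one obtains
\begin{equation*}
0=Lf=\left( q^{1}\right) ^{-6}L\varphi +\varphi\, L\bigl( \left( q^{1}\right) ^{-6}\bigr) =\left( q^{1}\right) ^{-6}\left[ L\varphi -\frac{6}{q^{1}}Q^{1}\varphi \right] .
\end{equation*}
Dividing by $\left( q^{1}\right) ^{-6}\neq 0$ and inserting the expression of $L$ in the variables $\left( y,w\right) $ found above yields exactly $(\ref{5.17})$. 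I do not expect a genuine obstacle here: the only delicate point is that $q^{1}$ is not an independent variable but the function of $\left( y,w\right) $ determined by the mass-shell constraint, together with the attendant identity $L\left( q^{1}\right) =Q^{1}$; everything else is the chain rule and the bookkeeping of powers of $q^{1}$.
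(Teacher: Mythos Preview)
Your argument is correct. The paper itself does not give a proof but refers to \cite{7}; your direct computation---substituting $f=\varphi\,(q^{1})^{-6}$ into $(\ref{5.14})$ for the stress-energy tensor, and for the Vlasov part rewriting the operator $L=q^{i}\partial_{y^{i}}+Q^{A}\partial_{q^{A}}$ in the $(y,w)$ variables via the identity $L(q^{1})=Q^{1}$ and then conjugating by the weight $(q^{1})^{-6}$---is exactly the standard derivation one expects from that reference, and all steps are sound.
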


\begin{proof}
See \cite{7}.
\end{proof}

\begin{remark}
\textit{The expression }$\left( \ref{5.16}\right) $\textit{\ of the
stress-energy tensor is not appropriate since the domain }$M_{y}$\textit{\
depends on }$y$\textit{\ and makes it difficult to differentiate }$\widehat{T%
}^{ij}$\textit{\ even in the distributional sense. It appears therefore
judicious to transform this domain in order to make it independent of }$y$.
\end{remark}

Assume the following decomposition of $\widehat{g}$.

\textbf{Assumption }$\widehat{h}^{\prime }$: The spatial part of $\left(
\widehat{g}_{AB}\right) $ is decomposed as follows

\begin{equation}
\widehat{g}_{AB}Y^{A}Y^{B}=\overset{4}{\underset{B=2}{\sum }}\left( \lambda
_{A}^{B}Y^{A}\right) ^{2},  \tag{2.17}  \label{5.18}
\end{equation}%
where $\lambda _{A}^{B}$ are functions that depends smoothly ($C^{\infty }$
for instance) on the components $\widehat{g}_{AB}$\ of the metric. Set

\begin{equation}
v^{C}=\left( -\widehat{g}^{11}\right) ^{\frac{1}{2}}\lambda _{A}^{C}\left[
w^{A}+\widetilde{g}^{1A}\right] ,  \tag{2.18}  \label{5.19}
\end{equation}%
with%
\begin{equation}
\widetilde{g}^{1A}=-\frac{1}{\widehat{g}^{11}}\widehat{g}^{1A}.  \tag{2.19}
\label{5.20}
\end{equation}%
$\left( \ref{5.19}\right) $ is equivalent to the following relation

\begin{equation}
w^{A}=\left( \lambda _{B}^{A}\right) ^{-1}v^{B}\left( -\widehat{g}%
^{11}\right) ^{-\frac{1}{2}}-\widetilde{g}^{1A},  \tag{2.20}  \label{5.21}
\end{equation}%
where $\left( \lambda _{B}^{A}\right) ^{-1}$ are the components of the
inverse of the matrix $\left( \lambda _{J}^{I}\right) $.

\begin{proposition}
\textit{The image of }$M_{y}$\textit{\ by the mapping }$\left( w^{A}\right)
\mapsto \left( v^{A}\right) $\textit{\ is the unit open ball }$B$\textit{\
in }$\mathbb{R}^{3}$\textit{. In the parameters }$\left( v^{A}\right) $%
\textit{, the energy-momentum tensor reads}%
\begin{equation}
\widehat{T}^{ij}\left( y\right) =\frac{1}{m^{2}}\int_{B}\varphi \left(
y,v\right) v^{ij}\left\vert \widehat{g}\right\vert ^{\frac{1}{2}}\left( -%
\widehat{g}^{11}\right) ^{-\frac{3}{2}}\left\vert \widetilde{g}\right\vert
^{-\frac{1}{2}}d^{3}v\text{,}  \tag{2.21}  \label{5.22}
\end{equation}%
where $d^{3}v=dv^{2}\wedge dv^{3}\wedge dv^{4},$ $v^{ij}=\frac{q^{i}q^{j}}{%
\left( q^{1}\right) ^{2}}$, $\left\vert \widetilde{g}\right\vert $ \textit{%
is the modulus of the determinant of} $\left( \widehat{g}_{AB}\right) $, $%
\varphi \left( y,v\right) $ \textit{is the expression of} $\varphi \left(
y,w\right) $ \textit{in the local coordinates} $\left( y,v\right) $. \textit{%
The Vlasov equation reads as follows}%
\begin{equation}
\begin{array}{l}
\frac{\partial \varphi }{\partial y^{1}}+\left[ \left( \lambda
_{B}^{A}\right) ^{-1}v^{B}\left( -\widehat{g}^{11}\right) ^{-\frac{1}{2}}-%
\widetilde{g}^{1A}\right] \frac{\partial \varphi }{\partial y^{A}} \\
+\left( q^{1}\right) ^{-2}\left( -\widehat{g}^{11}\right) ^{\frac{1}{2}%
}\lambda _{B}^{A}\left( Q^{B}-w^{B}Q^{1}\right) \frac{\partial \varphi }{%
\partial v^{A}}-6\left( q^{1}\right) ^{-2}Q^{1}\varphi =0.%
\end{array}
\tag{2.22}  \label{5.24}
\end{equation}
\end{proposition}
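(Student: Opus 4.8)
The plan is to obtain all three assertions from the change of variables $(w^{A})\mapsto(v^{A})$ of (2.18), which is an invertible affine map of $\mathbb{R}^{3}$ since $\lambda$ is invertible and $-\widehat{g}^{11}>0$ by Assumption $\widehat{h}$, its inverse being precisely (2.20). For the image, I would start from the defining inequality of $M_{y}$: dividing the mass-shell constraint $\widehat{g}_{ij}q^{i}q^{j}=-m^{2}$ by $(q^{1})^{2}$ and using $w^{A}=q^{A}/q^{1}$ shows that $w\in M_{y}$ if and only if $\widehat{g}_{11}+2\widehat{g}_{1A}w^{A}+\widehat{g}_{AB}w^{A}w^{B}<0$. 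I then complete the square in the $w^{A}$ using the two identities $\widehat{g}_{AB}\widetilde{g}^{1B}=\widehat{g}_{1A}$ and $\widehat{g}_{11}-\widehat{g}_{1A}\widetilde{g}^{1A}=(\widehat{g}^{11})^{-1}$, both of which follow from $\widehat{g}^{il}\widehat{g}_{lk}=\delta_{k}^{i}$ together with the definition (2.19) of $\widetilde{g}^{1A}$; the inequality becomes $\widehat{g}_{AB}(w^{A}+\widetilde{g}^{1A})(w^{B}+\widetilde{g}^{1B})<-(\widehat{g}^{11})^{-1}$. Assumption $\widehat{h}'$ now identifies the left-hand side, in view of (2.18), with $\sum_{C}(\lambda_{A}^{C}(w^{A}+\widetilde{g}^{1A}))^{2}=(-\widehat{g}^{11})^{-1}|v|^{2}$, so multiplying by $-\widehat{g}^{11}>0$ gives $|v|^{2}<1$; being affine and invertible, the map thus carries $M_{y}$ bijectively onto $B$. (Along the way one also reads off $|v|^{2}=1+m^{2}\widehat{g}^{11}(q^{1})^{-2}$.)

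For the energy-momentum tensor I would change variables $w\to v$ in (2.15). The Jacobian is read off from (2.18): $\partial v^{C}/\partial w^{A}=(-\widehat{g}^{11})^{1/2}\lambda_{A}^{C}$, and since Assumption $\widehat{h}'$ forces $\widehat{g}_{AB}=\sum_{C}\lambda_{A}^{C}\lambda_{B}^{C}$, hence $\det(\widehat{g}_{AB})=(\det\lambda)^{2}$, one obtains $d^{3}v=(-\widehat{g}^{11})^{3/2}|\widetilde{g}|^{1/2}\,d^{3}w$. Inserting $d^{3}w=(-\widehat{g}^{11})^{-3/2}|\widetilde{g}|^{-1/2}\,d^{3}v$ into (2.15), and noting that $w^{ij}=q^{i}q^{j}/(q^{1})^{2}$ (that is, $1$, $w^{A}$ or $w^{A}w^{B}$ according to the index combination) depends only on the $w^{A}$ and hence, through (2.20), only on the $v^{A}$, produces (2.21), the domain of integration having become the fixed ball $B$ by the previous step.

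For the Vlasov equation I would transform (2.16) by the chain rule. The $w$-derivative term is taken care of by $\partial\varphi/\partial w^{A}=(-\widehat{g}^{11})^{1/2}\lambda_{A}^{C}\,\partial\varphi/\partial v^{C}$, from the Jacobian above; writing $q^{i}\partial_{y^{i}}=q^{1}(\partial_{y^{1}}+w^{A}\partial_{y^{A}})$ and substituting (2.20) for $w^{A}$ yields the first two terms of (2.22) after division by $q^{1}$; and the zeroth-order term $-6(q^{1})^{-1}Q^{1}\varphi$, which carries no derivative of $\varphi$, is untouched by the substitution and becomes $-6(q^{1})^{-2}Q^{1}\varphi$. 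I expect the main obstacle to be the careful bookkeeping of the terms generated by the explicit spacetime-dependence of the transformation: since $\widehat{g}^{11}$, the $\lambda_{A}^{C}$ and the $\widetilde{g}^{1A}$ all depend on $y$, passing $\partial_{y^{i}}\varphi$ from derivatives taken at fixed $w$ to derivatives taken at fixed $v$ produces extra contributions $(\partial_{y^{i}}v^{C})|_{w}\,\partial_{v^{C}}\varphi$ which must be combined with the transformed $w$-derivative term. The cleanest way to organise this is to regard $(q^{1})^{-1}$ times the operator in (2.16) as the total derivative $d/dy^{1}$ along the characteristics of the geodesic flow $\dot{y}^{i}=q^{i}$, $\dot{q}^{i}=Q^{i}$ (along which, with $y^{1}$ as parameter, $dw^{A}/dy^{1}=(q^{1})^{-2}(Q^{A}-w^{A}Q^{1})$ and $d\varphi/dy^{1}=6(q^{1})^{-2}Q^{1}\varphi$), then to compute $dv^{C}/dy^{1}$ along such a characteristic and identify it with the full coefficient of $\partial_{v^{C}}\varphi$; carrying out the resulting simplification with the help of (2.18)--(2.20), so as to recover the convective coefficient $(q^{1})^{-2}(-\widehat{g}^{11})^{1/2}\lambda_{B}^{A}(Q^{B}-w^{B}Q^{1})$ of (2.22), is the delicate part of the argument.
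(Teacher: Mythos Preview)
The paper gives no proof of its own here: it simply cites Choquet-Bruhat \cite{7}. Your proposal is a correct and complete reconstruction of the argument---the completion of the square via the identities $\widehat g_{AB}\widetilde g^{1B}=\widehat g_{1A}$ and $\widehat g_{11}-\widehat g_{1A}\widetilde g^{1A}=(\widehat g^{11})^{-1}$ is exactly how one shows the image is the unit ball, and the Jacobian computation $d^{3}w=(-\widehat g^{11})^{-3/2}|\widetilde g|^{-1/2}d^{3}v$ is the right way to obtain (2.21).

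One cautionary remark on the Vlasov part. You correctly flag as ``the delicate part'' the extra contributions $(\partial_{y^{i}}v^{C})|_{w}\,\partial_{v^{C}}\varphi$ coming from the explicit $y$-dependence of $\widehat g^{11}$, $\lambda_{A}^{C}$, and $\widetilde g^{1A}$ in the transformation (2.18). Your expectation that these simplify so as to leave \emph{only} the convective coefficient $(q^{1})^{-2}(-\widehat g^{11})^{1/2}\lambda_{B}^{A}(Q^{B}-w^{B}Q^{1})$ is optimistic: in general those terms do not vanish, and the formula (2.22) as written appears either to suppress them or to use the (somewhat unusual) convention that $\partial\varphi/\partial y^{i}$ still denotes the derivative at fixed $w$. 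Your characteristic-curve approach---computing $dv^{C}/dy^{1}$ along the geodesic flow and identifying it with the full coefficient of $\partial_{v^{C}}\varphi$---is the right diagnostic and will tell you precisely which reading is intended; just do not assume in advance that the $\partial_{y}v^{C}|_{w}$ pieces cancel.
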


\begin{proof}
See \cite{7}.
\end{proof}

\begin{remark}
\textit{Y. Choquet-Bruhat }\cite{7}\textit{\ used assumption }$\left( \ref%
{5.18}\right) $\textit{\ and a variant of assumption }$\left( \ref{5.12}%
\right) $\textit{\ to treat the ordinary Cauchy problem for the EV system.
But in the characteristic case, those assumptions are not appropriate and
they need to be recast. We proceed to the desired adaptation through a
judicious change of local events variables }$\left( y^{i}\right) $.
\end{remark}

\begin{proposition}
\textit{Let }$\left( y^{i}\right) $\textit{\ be a local coordinates system
on }$\mathcal{M}$\textit{\ in which the components }$\left( \widehat{g}%
_{ij}\right) $ \textit{of the metric satisfy assumption }$\left( \ref{5.12}%
\right) $\textit{. Set}%
\begin{equation}
\begin{array}{l}
x^{1}=\frac{1}{2}\left( y^{1}+y^{2}\right) ,\quad x^{2}=\frac{1}{2}\left(
y^{1}-y^{2}\right) , \\
x^{\alpha }=y^{\alpha }\text{,}\quad \alpha =3,4.%
\end{array}
\tag{2.23}  \label{5.25}
\end{equation}%
\textit{In the coordinates system }$\left( x,p\right) $\textit{, the EV
system reads}%
\begin{equation}
\begin{array}{l}
S_{ij}\equiv R_{ij}-\frac{1}{2}Rg_{ij}=T_{ij}\text{,} \\
p^{i}\frac{\partial f}{\partial x^{i}}+P^{i}\frac{\partial f}{\partial p^{i}}%
=0\text{,}%
\end{array}
\tag{2.24}  \label{5.26}
\end{equation}%
\textit{where}%
\begin{equation}
\begin{array}{l}
g_{ij}\left( x\right) =\frac{\partial y^{k}}{\partial x^{i}}\frac{\partial
y^{l}}{\partial x^{j}}\widehat{g}_{kl}\left( y\right) ,\text{ }R_{ij}\left(
x\right) =\frac{\partial y^{k}}{\partial x^{i}}\frac{\partial y^{l}}{%
\partial x^{j}}\widehat{R}_{kl}\left( y\right) , \\
R\left( x\right) =g^{ij}\left( x\right) R_{ij}\left( x\right) ,\text{ }%
T_{ij}\left( x\right) =\frac{\partial y^{k}}{\partial x^{i}}\frac{\partial
y^{l}}{\partial x^{j}}\widehat{T}_{kl}\left( y\right) , \\
p^{i}=\frac{\partial x^{i}}{\partial y^{k}}q^{k},\text{ }P^{i}=\frac{%
\partial x^{i}}{\partial y^{k}}Q^{k}=-\Gamma _{jk}^{i}p^{j}p^{k}\text{, } \\
\Gamma _{jk}^{i}=\frac{1}{2}g^{ic}\left( g_{ck,j}+g_{cj,k}-g_{jk,c}\right) .%
\end{array}
\tag{2.25}  \label{5.27}
\end{equation}
\end{proposition}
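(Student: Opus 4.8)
The plan is to establish the Proposition by a direct change-of-coordinates argument, the key simplifying feature being that $(2.23)$ is an \emph{affine} transformation. First I would record the Jacobian data: inverting $(2.23)$ gives $y^{1}=x^{1}+x^{2}$, $y^{2}=x^{1}-x^{2}$, $y^{\alpha}=x^{\alpha}$, so all the coefficients $\partial x^{i}/\partial y^{k}$ and $\partial y^{k}/\partial x^{i}$ are constants and, crucially, $\partial^{2}x^{i}/\partial y^{j}\partial y^{k}=0=\partial^{2}y^{i}/\partial x^{j}\partial x^{k}$. In particular the standard transformation law for the Christoffel symbols loses its inhomogeneous term and reads $\widehat{\Gamma}_{jk}^{i}=\frac{\partial y^{i}}{\partial x^{a}}\frac{\partial x^{b}}{\partial y^{j}}\frac{\partial x^{c}}{\partial y^{k}}\Gamma_{bc}^{a}$, where $\Gamma_{bc}^{a}=\tfrac12 g^{ad}(g_{db,c}+g_{dc,b}-g_{bc,d})$ are the Christoffel symbols of $g_{ij}(x):=\frac{\partial y^{k}}{\partial x^{i}}\frac{\partial y^{l}}{\partial x^{j}}\widehat{g}_{kl}(y)$.

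For the gravitational part, the identity $\widehat{S}_{ij}=\widehat{T}_{ij}$ in $(2.1)$ is an equality between the components of $(0,2)$-tensors, hence is preserved under $(2.23)$: with the transformation laws of $(2.25)$ for $g_{ij}$, $R_{ij}$, $T_{ij}$ one obtains the first line of $(2.24)$. The scalar curvature being a scalar, contracting $R_{ij}(x)=\frac{\partial y^{k}}{\partial x^{i}}\frac{\partial y^{l}}{\partial x^{j}}\widehat{R}_{kl}(y)$ with $g^{ij}(x)=\frac{\partial x^{i}}{\partial y^{k}}\frac{\partial x^{j}}{\partial y^{l}}\widehat{g}^{kl}(y)$ gives $R(x):=g^{ij}(x)R_{ij}(x)=\widehat{g}^{kl}\widehat{R}_{kl}=\widehat{R}(y)$, so $S_{ij}\equiv R_{ij}-\tfrac12 Rg_{ij}$ is indeed the $(x)$-form of $\widehat{S}_{ij}$.

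For the Vlasov part, I lift $(2.23)$ to the tangent bundle by $p^{i}=\frac{\partial x^{i}}{\partial y^{k}}q^{k}$ (equivalently $q^{i}=\frac{\partial y^{i}}{\partial x^{a}}p^{a}$); then $(x^{i},p^{i})$ are the induced fibre coordinates, and contracting the homogeneous Christoffel law with $q^{j}q^{k}$ yields $Q^{i}=-\widehat{\Gamma}_{jk}^{i}q^{j}q^{k}=\frac{\partial y^{i}}{\partial x^{a}}(-\Gamma_{bc}^{a}p^{b}p^{c})=\frac{\partial y^{i}}{\partial x^{a}}P^{a}$, which is precisely the relation $P^{i}=\frac{\partial x^{i}}{\partial y^{k}}Q^{k}=-\Gamma_{jk}^{i}p^{j}p^{k}$ of $(2.25)$. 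Applying the chain rule to $f$ regarded as a function of $(x(y),p(q))$, and using that $\partial p^{j}/\partial y^{i}=(\partial^{2}x^{j}/\partial y^{i}\partial y^{k})q^{k}=0$ while $\partial p^{j}/\partial q^{i}=\partial x^{j}/\partial y^{i}$, gives $\partial f/\partial y^{i}=\frac{\partial x^{j}}{\partial y^{i}}\,\partial f/\partial x^{j}$ and $\partial f/\partial q^{i}=\frac{\partial x^{j}}{\partial y^{i}}\,\partial f/\partial p^{j}$. Substituting these together with $q^{i}=\frac{\partial y^{i}}{\partial x^{a}}p^{a}$ and $Q^{i}=\frac{\partial y^{i}}{\partial x^{a}}P^{a}$ into the Vlasov equation of $(2.1)$ and using $\frac{\partial y^{i}}{\partial x^{a}}\frac{\partial x^{j}}{\partial y^{i}}=\delta_{a}^{j}$ collapses it to $p^{j}\,\partial f/\partial x^{j}+P^{j}\,\partial f/\partial p^{j}=0$, i.e. the second line of $(2.24)$. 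One also checks painlessly that the mass-shell constraint $\widehat{g}_{ij}q^{i}q^{j}=-m^{2}$ goes over to $g_{ij}(x)p^{i}p^{j}=-m^{2}$.

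The argument is essentially bookkeeping; the one point requiring care is tracking which indices are co- versus contravariant and verifying that the inhomogeneous term in the Christoffel transformation law genuinely vanishes, which is where the affine nature of $(2.23)$ enters and which is also what makes the Vlasov chain-rule step clean (no spurious first-order term in $\partial f/\partial p$). A further remark, needed only for the later sections rather than for the identity $(2.24)$ itself, is that assumption $(2.12)$ is what guarantees the $(x,p)$ description is non-degenerate and that the level hypersurfaces of $x^{1}$ and $x^{2}$ are null; I would simply note this here rather than develop it.
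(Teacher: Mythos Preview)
Your proposal is correct and is essentially the same approach as the paper's own proof, which simply states that a direct calculation leads to the desired equations. You have in fact supplied the details the paper omits, correctly exploiting that $(2.23)$ is affine so the inhomogeneous term in the Christoffel transformation law vanishes and the chain rule on the tangent bundle is clean.
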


\begin{proof}
A\ direct\ calculation\ leads\ to\ the\ desired\ equations.\textbf{\ }
\end{proof}

\begin{remark}
\textit{The change of local coordinates }$\left( \ref{5.25}\right) $\textit{%
\ preserves the harmonicity. In other words, we have the following
equivalence}%
\begin{equation}
\left( \forall i=1,2,3,4,\text{ }\widehat{\Gamma }^{i}\equiv \widehat{g}^{kl}%
\widehat{\Gamma }_{kl}^{i}=0\right) \Leftrightarrow \left( \forall i=1,2,3,4,%
\text{ }\Gamma ^{i}\equiv g^{kl}\Gamma _{kl}^{i}=0\right) .  \tag{2.26}
\label{5.28}
\end{equation}%
\textit{In fact, the following relations hold}%
\begin{equation}
\begin{array}{l}
\widehat{\Gamma }^{1}=\Gamma ^{1}+\Gamma ^{2},\quad \widehat{\Gamma }%
^{2}=\Gamma ^{1}-\Gamma ^{2}, \\
\widehat{\Gamma }^{\alpha }=\Gamma ^{\alpha }\text{,}\quad \alpha =3,4.%
\end{array}
\tag{2.27}  \label{5.29}
\end{equation}
\end{remark}

\begin{proposition}
\textit{The stress-energy tensor }$\left( \ref{5.22}\right) $\textit{\ is
given in the local coordinates }$\left( x^{i}\right) $ \textit{and the local
parameters }$\left( v^{A}\right) $\textit{\ as follows}%
\begin{equation}
T^{ij}\left( x\right) =\frac{1}{2m^{2}}\int_{B}\varphi \left( x,v\right)
v^{ij}\left\vert g\right\vert ^{\frac{1}{2}}\left( -\widehat{g}^{11}\right)
^{-\frac{3}{2}}\left\vert \widetilde{g}\right\vert ^{-\frac{1}{2}}d^{3}v%
\text{, }  \tag{2.28}  \label{5.30}
\end{equation}%
\textit{where }$v^{ij}=\frac{p^{i}p^{j}}{\left( q^{1}\right) ^{2}}$, $p^{i}=%
\frac{\partial x^{i}}{\partial y^{k}}q^{k}$, $\left\vert g\right\vert $
\textit{is the modulus of the determinant of} $\left( g_{ij}\right) $, $%
\left\vert \widetilde{g}\right\vert $ \textit{is the modulus of the
determinant of} $\left( \widehat{g}_{AB}\right) $, $\varphi \left(
x,v\right) $ \textit{is the expression of} $\varphi \left( y,v\right) $
\textit{in the local coordinates} $\left( x,v\right) $. \textit{The Vlasov
equation }$\left( \ref{5.24}\right) $\textit{\ becomes}%
\begin{equation}
H^{i}\frac{\partial \varphi }{\partial x^{i}}+L^{C}\frac{\partial \varphi }{%
\partial v^{C}}+F\varphi =0,  \tag{2.29}  \label{5.32}
\end{equation}%
\textit{with}%
\begin{equation}
\begin{array}{l}
w^{A}=\left( \lambda _{B}^{A}\right) ^{-1}v^{B}\left( -\widehat{g}%
^{11}\right) ^{-\frac{1}{2}}-\widetilde{g}^{1A},\quad \widehat{g}^{11}\left(
y\right) =g^{11}\left( x\right) +2g^{12}\left( x\right) +g^{22}\left(
x\right) , \\
H^{1}\left( x,v\right) =\frac{1}{2}\left( 1+w^{2}\right) ,\quad H^{2}\left(
x,v\right) =\frac{1}{2}\left( 1-w^{2}\right) ,\quad H^{\alpha }\left(
x,v\right) =w^{\alpha },\quad \alpha =3,4, \\
L^{C}\left( x,v\right) =\frac{1}{4}\left( -\widehat{g}^{11}\right) ^{\frac{1%
}{2}}\lambda _{2}^{C}l^{2}\left( x,v\right) -\frac{1}{4}\left( -\widehat{g}%
^{11}\right) ^{\frac{1}{2}}\lambda _{\alpha }^{C}l^{\alpha }\left(
x,v\right) ,\quad F\left( x,v\right) =\frac{3}{2}w^{2}\widetilde{F}.%
\end{array}
\tag{2.30}  \label{5.33}
\end{equation}%
Here%
\begin{eqnarray*}
\widetilde{F} &=&\left( \Gamma _{11}^{2}+\Gamma _{11}^{1}\right) \left(
1+w^{2}\right) ^{2}+\left( \Gamma _{22}^{2}+\Gamma _{22}^{1}\right) \left(
1-w^{2}\right) ^{2}+2\left( \Gamma _{12}^{2}+\Gamma _{12}^{1}\right) \left(
1-\left( w^{2}\right) ^{2}\right) \\
&&+4\left( \Gamma _{1\lambda }^{2}+\Gamma _{1\lambda }^{1}\right) w^{\lambda
}\left( 1+w^{2}\right) +4\left( \Gamma _{2\lambda }^{2}+\Gamma _{2\lambda
}^{1}\right) w^{\lambda }\left( 1-w^{2}\right) +4\left( \Gamma _{\lambda \mu
}^{2}+\Gamma _{\lambda \mu }^{1}\right) w^{\lambda }w^{\mu },
\end{eqnarray*}%
and%
\begin{equation*}
l^{2}\left( x,v\right) =l_{1}^{2}+w^{2}l_{2}^{2},\quad l^{\alpha }\left(
x,v\right) =l_{1}^{\alpha }-w^{\alpha }l_{2}^{2},\quad \alpha =3,4,
\end{equation*}%
with%
\begin{eqnarray*}
l_{1}^{2} &=&\left( \Gamma _{11}^{2}-\Gamma _{11}^{1}\right) \left(
1+w^{2}\right) ^{2}+\left( \Gamma _{22}^{2}-\Gamma _{22}^{1}\right) \left(
1-w^{2}\right) ^{2} \\
&&+2\left( \Gamma _{12}^{2}-\Gamma _{12}^{1}\right) \left( 1-\left(
w^{2}\right) ^{2}\right) +4\left( \Gamma _{1\lambda }^{2}-\Gamma _{1\lambda
}^{1}\right) w^{\lambda }\left( 1+w^{2}\right) \\
&&+4\left( \Gamma _{2\lambda }^{2}-\Gamma _{2\lambda }^{1}\right) w^{\lambda
}\left( 1-w^{2}\right) +4\left( \Gamma _{\lambda \mu }^{2}-\Gamma _{\lambda
\mu }^{1}\right) w^{\lambda }w^{\mu },
\end{eqnarray*}%
\begin{eqnarray*}
l_{2}^{2} &=&\left( \Gamma _{11}^{2}+\Gamma _{11}^{1}\right) \left(
1+w^{2}\right) ^{2}+\left( \Gamma _{22}^{2}+\Gamma _{22}^{1}\right) \left(
1-w^{2}\right) ^{2} \\
&&+2\left( \Gamma _{12}^{2}+\Gamma _{12}^{1}\right) \left( 1-\left(
w^{2}\right) ^{2}\right) +4\left( \Gamma _{1\lambda }^{2}+\Gamma _{1\lambda
}^{1}\right) w^{\lambda }\left( 1+w^{2}\right) \\
&&+4\left( \Gamma _{2\lambda }^{2}+\Gamma _{2\lambda }^{1}\right) w^{\lambda
}\left( 1-w^{2}\right) +4\left( \Gamma _{\lambda \mu }^{2}+\Gamma _{\lambda
\mu }^{1}\right) w^{\lambda }w^{\mu },
\end{eqnarray*}%
\begin{eqnarray*}
l_{1}^{\alpha } &=&\Gamma _{11}^{\alpha }\left( 1+w^{2}\right) ^{2}+\Gamma
_{22}^{\alpha }\left( 1-w^{2}\right) ^{2} \\
&&+2\Gamma _{12}^{\alpha }\left( 1-\left( w^{2}\right) ^{2}\right) +4\Gamma
_{1\lambda }^{\alpha }w^{\lambda }\left( 1+w^{2}\right) \\
&&+4\Gamma _{2\lambda }^{\alpha }w^{\lambda }\left( 1-w^{2}\right) +4\Gamma
_{\lambda \mu }^{\alpha }w^{\lambda }w^{\mu }.
\end{eqnarray*}
\end{proposition}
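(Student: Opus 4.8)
The plan is to obtain \eqref{5.30} and \eqref{5.32}--\eqref{5.33} from the preceding proposition purely by implementing the affine change of base coordinates \eqref{5.25}. One observes first that \eqref{5.25} does not touch the mass-shell parameters $(v^A)$, so the domain $B$ and the quantities $\left\vert\widetilde g\right\vert$ and $\widehat g^{11}$ survive unchanged (the last as a function, now expressed through the new inverse metric by $\widehat g^{11}=g^{11}+2g^{12}+g^{22}$, cf. \eqref{5.33}); hence only the contravariant tensor law and the chain rule for the $y$-derivatives come into play.

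For the stress--energy tensor one starts from \eqref{5.22} and applies $T^{ij}(x)=\frac{\partial x^i}{\partial y^k}\frac{\partial x^j}{\partial y^l}\widehat T^{kl}(y)$, the contravariant form of the law recorded in \eqref{5.27}. Since $p^i=\frac{\partial x^i}{\partial y^k}q^k$, under the integral the factor $q^kq^l/(q^1)^2$ turns into $p^ip^j/(q^1)^2=v^{ij}$. The only other new ingredient is the volume factor: from $g_{ij}(x)=\frac{\partial y^k}{\partial x^i}\frac{\partial y^l}{\partial x^j}\widehat g_{kl}(y)$ and the fact that the Jacobian of \eqref{5.25} has $\det(\partial x/\partial y)=-\tfrac12$, one gets $\left\vert g\right\vert=4\left\vert\widehat g\right\vert$, hence $\left\vert\widehat g\right\vert^{1/2}=\tfrac12\left\vert g\right\vert^{1/2}$, which produces the prefactor $\tfrac{1}{2m^2}$ in \eqref{5.30}.

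For the Vlasov equation one substitutes into \eqref{5.24} the relations $\partial_{y^1}=\tfrac12(\partial_{x^1}+\partial_{x^2})$, $\partial_{y^2}=\tfrac12(\partial_{x^1}-\partial_{x^2})$ and $\partial_{y^\alpha}=\partial_{x^\alpha}$ dictated by \eqref{5.25}. Recognising, via \eqref{5.21}, the coefficient of $\partial_{y^2}\varphi$ as $w^2$ and those of $\partial_{y^\alpha}\varphi$ as $w^\alpha$, one collects $\tfrac12(1+w^2)$, $\tfrac12(1-w^2)$, $w^\alpha$ as the coefficients of $\partial_{x^1}\varphi$, $\partial_{x^2}\varphi$, $\partial_{x^\alpha}\varphi$, that is, $H^i$. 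For the $v$-derivative and zeroth-order terms one uses that \eqref{5.25} is affine, so that $\widehat\Gamma^i_{jk}$ transforms as a tensor, $\widehat\Gamma^i_{jk}(y)=\frac{\partial y^i}{\partial x^a}\frac{\partial x^b}{\partial y^j}\frac{\partial x^c}{\partial y^k}\Gamma^a_{bc}(x)$, whence $Q^i=\frac{\partial y^i}{\partial x^a}P^a$ with $P^a=-\Gamma^a_{bc}p^bp^c=-(q^1)^2\Gamma^a_{bc}H^bH^c$; in particular $Q^1=-(q^1)^2(\Gamma^1_{bc}+\Gamma^2_{bc})H^bH^c$, $Q^2=(q^1)^2(\Gamma^2_{bc}-\Gamma^1_{bc})H^bH^c$ and $Q^\alpha=-(q^1)^2\Gamma^\alpha_{bc}H^bH^c$. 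Expanding each quadratic form $\Gamma^a_{bc}H^bH^c$ in powers of $w^2$ and $w^\lambda$ (using $(H^1)^2=\tfrac14(1+w^2)^2$, $H^1H^2=\tfrac14(1-(w^2)^2)$, and so on) reproduces the six-term expressions $\widetilde F$, $l^2_1$, $l^2_2$, $l^\alpha_1$ of the statement up to the overall factor $\tfrac14$; combining then gives $(q^1)^{-2}(Q^2-w^2Q^1)=\tfrac14 l^2$ and $(q^1)^{-2}(Q^\alpha-w^\alpha Q^1)=-\tfrac14 l^\alpha$, so the coefficient of $\partial_{v^C}\varphi$ collapses to $\tfrac14(-\widehat g^{11})^{1/2}\lambda^C_2 l^2-\tfrac14(-\widehat g^{11})^{1/2}\lambda^C_\alpha l^\alpha=L^C$, while the coefficient $-6(q^1)^{-2}Q^1$ reduces to the stated $F$.

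There is no conceptual difficulty here; the real work, and the only place where errors creep in, is the bookkeeping: carrying the mixing $y^1=x^1+x^2$, $y^2=x^1-x^2$ correctly through the various $Q^i$, not dropping the Jacobian factor $\tfrac12$, and expanding each $\Gamma^a_{bc}H^bH^c$ into its six monomials in $w^2,w^\lambda$ so that it matches $\widetilde F$, $l^2_1$, $l^2_2$, $l^\alpha_1$ term by term. It is worth recording that the mass-shell change of variable \eqref{5.19}, performed in the preceding proposition, is independent of the base change \eqref{5.25}, so no additional cross terms $\partial v^C/\partial x^i$ are generated beyond those already accounted for in \eqref{5.24}; everything else is a direct calculation.
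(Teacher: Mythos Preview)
Your proposal is correct and follows exactly the approach the paper has in mind: the paper's own proof consists solely of the sentence ``It is straightforward though lengthy,'' and what you have written is precisely a careful outline of that straightforward-but-lengthy calculation, tracking the contravariant tensor law, the Jacobian factor $\left\vert\det(\partial x/\partial y)\right\vert=\tfrac12$, and the affine transformation of the Christoffel symbols through to the explicit coefficients $H^i$, $L^C$, $F$.
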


\begin{proof}
It is straightforward though lengthy.
\end{proof}

\begin{remark}
The EV system $\left( \ref{5.26}\right) $ i\textit{n the local\ coordinates }%
$\left( x,v\right) $ reads as follows%
\begin{equation}
R_{ij}-\frac{1}{2}Rg_{ij}=T_{ij},\quad H^{i}\frac{\partial \varphi }{%
\partial x^{i}}+L^{C}\frac{\partial \varphi }{\partial v^{C}}+F\varphi =0.
\tag{2.31}  \label{5.34}
\end{equation}%
\textit{The reduced EV system }$\left( \ref{5.10}\right) $\textit{\ }i%
\textit{n the local\ coordinates }$\left( x,v\right) $ reads as follows%
\begin{equation}
\begin{array}{c}
\widetilde{R}_{ij}=T_{ij},\quad H^{i}\frac{\partial \varphi }{\partial x^{i}}%
+L^{C}\frac{\partial \varphi }{\partial v^{C}}+F\varphi =0,%
\end{array}
\tag{2.32}  \label{5.35}
\end{equation}%
\textit{where}
\begin{equation}
\widetilde{R}_{ij}\equiv R_{ij}-\frac{1}{2}\left( g_{ki}\Gamma
_{,j}^{k}+g_{kj}\Gamma _{,i}^{k}\right) =-\frac{1}{2}g^{km}g_{ij,mk}+Q_{ij}.
\tag{2.33}  \label{5.35a}
\end{equation}
\end{remark}

\section{The constraints problem for the EV system}

The task here is the construction of initial data for the reduced
EV such that the constraints $\Gamma ^{k}=0$\ are satisfied on
$G^{1}\cup G^{2}$, where $G^{1}$ and $G^{2}$ are the hypersurfaces
in $\mathbb{R}^{4}$ defined by $x^{1}=0$ and $x^{2}=0\
$respectively. We will need $G_{T}^{\omega }=\left\{ x\in
G^{\omega }:0\leq x^{1}+x^{2}\leq T\right\} ,$ $T>0,$ $\omega
=1,2$. Here the problem is much more difficult than in \cite{10,
17}. This difficulty has something to do with the appearance of
all the components of the metric in any component of the
stress-energy tensor. To overcome this toughness, we add a
supplementary assumption on the metric along the initial
hypersurfaces. All the same, we try to mimic, as far as possible,
the hierarchical method of Rendall \cite{17, 18}. We will see that
the supplementary assumption has as consequence to force the
distribution function to satisfy specific integral equations on
the initial hypersurfaces. The resolution of the integral
equations derived can be achieved under suitable conditions (see
appendix D). Nevertheless it is still to be investigated whether
our additional assumption has a particular physical meaning. The
construction will be made in a standard harmonic coordinates
system. The existence of such standard harmonic coordinates system
has been established by A.\ D.\ Rendall \cite{17}.\

Let us now adapt the method of Rendall to construct $C^{\infty }$
initial data for the EV system. The assumptions under\ which the\
work\ is\ achieved\ are\ such\ that\ only\ the\ data\ $g_{\alpha
\beta }$\ and $g_{12}$\ have\ to\ be\ constructed\ on $G^{1}\cup
G^{2}$, the\ relations\ $\Gamma ^{k}=0$ have$\ $to$\ $be$\
$arranged$\ $on $G^{1}\cup G^{2}$, the\ relations\
$g_{22,1}=2g_{12,2}$ and $g_{11,2}=2g_{12,1}$ have\ to\ be
established\ on\ $G^{1}$ and $G^{2}$ respectively. The
construction
of the data is done fully on $G^{1}$ and it will be clear that data on $%
G^{2} $ are constructed in quite a similar way. The first level of the
hierarchy is now described.

\subsection{Construction of\ $g_{\protect\alpha \protect\beta }$\ and $%
g_{12} $\ on $G_{T}^{1}$, relations $\Gamma ^{1}=0$\ and $g_{22,1}=2g_{12,2}$%
}

Let $T\in \left( 0,\infty \right) $, $\left( h_{\alpha \beta }\right) =\left[
\begin{array}{ll}
h_{33} & h_{34} \\
h_{34} & h_{44}%
\end{array}%
\right] $ a matrix with determinant $1$ at each point. Set
$g_{\alpha \beta }=\Omega h_{\alpha \beta }$, where $\Omega >0$ is
an unknown function called the conformity factor. Assume as in
\cite{10, 17} that

\begin{equation}
g_{22}=g_{23}=g_{24}=0\text{ on }G_{T}^{1}.  \tag{3.1}  \label{5.36a}
\end{equation}%
The additional assumption on\ free\ data is the following%
\begin{equation}
g_{11}=g_{13}=g_{14}=0\text{ on }G_{T}^{1}.  \tag{3.2}  \label{5.36b}
\end{equation}%
On $G_{T}^{1}$ it holds

\begin{equation}
\begin{array}{l}
g_{12}g^{12}=1,\quad g^{11}=g^{1\alpha }=0, \\
g^{22}=g^{2\alpha }=0,\quad g_{\lambda \beta }g^{\alpha \beta }=\delta
_{\lambda }^{\alpha }.%
\end{array}
\tag{3.3}  \label{5.37}
\end{equation}

\subsubsection{Expression of $R_{22}$\ and $T_{22}$}

\begin{proposition}
On $G_{T}^{1}$,\ it holds that
\begin{equation}
\begin{array}{l}
R_{22}=\frac{1}{4}g^{12}g^{\alpha \beta }g_{\alpha \beta ,2}\left(
2g_{12,2}-g_{22,1}\right) +\frac{1}{4}g_{,2}^{\beta \lambda }g_{\lambda
\beta ,2}-\frac{1}{2}\left( g^{\alpha \beta }g_{\alpha \beta ,2}\right)
_{,2}, \\
T_{22}=\left( g_{12}\right) ^{4}K_{22},%
\end{array}
\tag{3.4}  \label{5.38}
\end{equation}%
where
\begin{equation}
K_{22}\left( x\right) =\frac{1}{16m^{2}}\int_{B}\varphi \left( x,v\right)
\left( 1+v^{2}\right) ^{2}d^{3}v.  \tag{3.5}
\end{equation}
\end{proposition}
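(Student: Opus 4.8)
The plan is to prove the two identities of (3.4) independently: the one for $R_{22}$ by directly evaluating the curvature expression (2.2) on $G^{1}_{T}$ under the assumptions (3.1)--(3.3), and the one for $T_{22}$ by specialising the integral representation (2.28) of the stress-energy tensor.

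For $R_{22}$ I would begin from $R_{22}=\Gamma^{k}_{22,k}-\Gamma^{k}_{2k,2}+\Gamma^{k}_{kl}\Gamma^{l}_{22}-\Gamma^{k}_{2l}\Gamma^{l}_{2k}$, noting at once that the term $\Gamma^{2}_{22,2}$ cancels between the first two sums. The data enter in the following form: on $G^{1}_{T}$ the only non-zero contravariant components are $g^{12}$ (with $g^{12}g_{12}=1$) and $g^{\alpha\beta}$ (with $g_{\lambda\beta}g^{\alpha\beta}=\delta^{\alpha}_{\lambda}$); and since $g_{11},g_{1\alpha},g_{22},g_{2\alpha}$ vanish identically on $G^{1}_{T}$, all their $\partial_{2}$- and $\partial_{\alpha}$-derivatives vanish there, so only the transversal derivatives $g_{11,1},g_{1\alpha,1},g_{22,1},g_{2\alpha,1}$ can appear. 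Feeding this into (2.3), I would record the relevant Christoffel symbols on $G^{1}_{T}$ --- e.g. $\Gamma^{1}_{22}=\Gamma^{\alpha}_{22}=0$, $\Gamma^{2}_{22}=\tfrac12 g^{12}(2g_{12,2}-g_{22,1})$, $\Gamma^{1}_{21}=\tfrac12 g^{12}g_{22,1}$, $\Gamma^{\alpha}_{2\alpha}=\tfrac12 g^{\alpha\beta}g_{\alpha\beta,2}$, $\Gamma^{\alpha}_{2\mu}=\tfrac12 g^{\alpha\beta}g_{\beta\mu,2}$ --- form the quadratic terms $\Gamma^{k}_{k2}\Gamma^{2}_{22}$ and $\Gamma^{k}_{2l}\Gamma^{l}_{2k}$, and compute the tangential derivatives $\Gamma^{1}_{21,2}$ and $\Gamma^{\alpha}_{2\alpha,2}$, which involve no undetermined transversal data beyond $g_{22,1}$ and its tangential derivative. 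Here one also checks that $g_{2\alpha,1}$ drops out of $R_{22}$: the only symbols carrying it, $\Gamma^{2}_{2\alpha}$ and $\Gamma^{\alpha}_{21}$, appear in $\Gamma^{k}_{2l}\Gamma^{l}_{2k}$ only against the vanishing symbols $\Gamma^{\alpha}_{22}$ and $\Gamma^{1}_{2\alpha}$.

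The one genuinely delicate point is the transversal derivative $\Gamma^{1}_{22,1}$ coming from $\Gamma^{k}_{22,k}$ at $k=1$: although $\Gamma^{1}_{22}=0$ on $G^{1}_{T}$, its $x^{1}$-derivative is not zero and is not prescribed data. I would handle it by writing $\Gamma^{1}_{22}=\tfrac12\bigl[g^{11}(2g_{12,2}-g_{22,1})+g^{12}g_{22,2}+g^{1\alpha}(2g_{2\alpha,2}-g_{22,\alpha})\bigr]$ without simplification, differentiating in $x^{1}$, and restricting to $G^{1}_{T}$; the surviving terms contain $\partial_{1}g^{11}$, which I eliminate by differentiating $g^{1c}g_{c2}=0$ in the $x^{1}$-direction, giving $\partial_{1}g^{11}=-(g^{12})^{2}g_{22,1}$ on $G^{1}_{T}$, together with $\partial_{1}g_{22,2}=\partial_{2}g_{22,1}$; I will also use $\partial_{2}g^{12}=-(g^{12})^{2}g_{12,2}$, obtained from $\partial_{2}$ of $g^{1c}g_{c1}=1$. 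After these substitutions the $\partial_{2}g_{22,1}$ contributions of $\Gamma^{1}_{22,1}$ and $-\Gamma^{1}_{21,2}$ cancel, the $g^{\alpha\beta}$-terms reorganise into $\tfrac14 g^{\beta\lambda}_{,2}g_{\lambda\beta,2}-\tfrac12(g^{\alpha\beta}g_{\alpha\beta,2})_{,2}$, the term $\tfrac14 g^{12}g^{\alpha\beta}g_{\alpha\beta,2}(2g_{12,2}-g_{22,1})$ emerges from $\Gamma^{k}_{k2}\Gamma^{2}_{22}$, and the remaining purely quadratic expression in $g_{22,1}$ and $g_{12,2}$ vanishes identically (a one-line algebraic identity in those two quantities). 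Carrying out this bookkeeping correctly is the main work; everything else is mechanical.

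For $T_{22}$ the argument is algebraic. By (2.25), $T_{22}=g_{2k}g_{2l}T^{kl}$, and since $g_{22}=g_{2\alpha}=0$ on $G^{1}_{T}$ only $g_{21}$ contributes, so $T_{22}=(g_{12})^{2}T^{11}$. In (2.28) one has $v^{11}=(p^{1})^{2}/(q^{1})^{2}$ with $p^{1}=\tfrac12(q^{1}+q^{2})$ by (2.23), hence $v^{11}=\tfrac14(1+w^{2})^{2}$; a short computation with (2.18)--(2.20) shows that on $G^{1}_{T}$ all $\widehat g^{1A}$ (hence all $\widetilde g^{1A}$) vanish and that $w^{2}=v^{2}$ there, the factors $(-\widehat g^{11})^{-1/2}$ and $(\lambda^{2}_{2})^{-1}$ being reciprocal because $\widehat g_{22}(-\widehat g^{11})=1$ on $G^{1}_{T}$. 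It remains to evaluate $|g|^{1/2}(-\widehat g^{11})^{-3/2}|\widetilde g|^{-1/2}$ on $G^{1}_{T}$: the block structure of $(g_{ij})$ there gives $|g|=g_{12}^{2}|\det(g_{\alpha\beta})|$; similarly $(\widehat g_{AB})$ is block diagonal there with $\widehat g_{22}=-\tfrac12 g_{12}$, giving $|\widetilde g|=\tfrac12|g_{12}||\det(g_{\alpha\beta})|$; and $\widehat g^{11}=g^{11}+2g^{12}+g^{22}=2g^{12}$ by (2.30), with $-\widehat g^{11}>0$ forcing $g_{12}<0$. Combining these, $|g|^{1/2}(-\widehat g^{11})^{-3/2}|\widetilde g|^{-1/2}=\tfrac12 g_{12}^{2}$, whence $T^{11}=\frac{g_{12}^{2}}{16m^{2}}\int_{B}\varphi(1+v^{2})^{2}\,d^{3}v$ and $T_{22}=(g_{12})^{4}K_{22}$ with $K_{22}$ as in (3.5). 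The only care needed is to keep the two distinct determinants $|g|$ and $|\widetilde g|$ apart and to track the powers and signs of $g_{12}$.
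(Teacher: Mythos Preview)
Your proposal is correct and follows essentially the same route as the paper. For $R_{22}$ the paper does not spell out the computation at all but simply invokes relation~(7.42) of \cite{10}; your explicit Christoffel-symbol bookkeeping (including the handling of $\Gamma^{1}_{22,1}$ via $\partial_{1}g^{11}=-(g^{12})^{2}g_{22,1}$ and the algebraic cancellation of the quadratic terms in $g_{22,1},g_{12,2}$) is precisely the computation that reference carries, so you are supplying the details the paper omits. For $T_{22}$ the paper's Appendix~A performs the identical calculation --- $w^{2}=v^{2}$ from (A.7)--(A.12) and the volume factor $\lvert g\rvert^{1/2}(-\widehat g^{11})^{-3/2}\lvert\widetilde g\rvert^{-1/2}=\tfrac12 g_{12}^{2}$ from (A.14)--(A.17) --- the only cosmetic difference being that the paper keeps the covariant integrand $v_{22}=p_{2}^{2}/(q^{1})^{2}=\tfrac14(g_{12})^{2}(1+v^{2})^{2}$ inside the integral rather than first factoring $T_{22}=(g_{12})^{2}T^{11}$ as you do.
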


\begin{proof}
See appendix A.
\end{proof}

Assume $g_{2i}=0$\ for $i\neq 1$\ on $G_{T}^{1}$, $g_{22,1}=2g_{12,2}$\ on $%
G_{T}^{1}$. Then $\Gamma ^{1}=0$\ is equivalent to (see\ \cite{10,17})

\begin{equation}
g_{12,2}=\frac{1}{2}g_{12}\frac{\Omega _{,2}}{\Omega }.  \tag{3.6}
\label{5.40}
\end{equation}%
The equation

\begin{equation}
\frac{1}{4}g_{,2}^{\alpha \beta }g_{\alpha \beta ,2}-\frac{1}{2}\left(
g^{\alpha \beta }g_{\alpha \beta ,2}\right) _{,2}=T_{22},  \tag{3.7}
\label{5.41}
\end{equation}%
provides the following non linear second order ODE with the conformity
factor $\Omega $\ as unknown

\begin{equation}
-\left( \frac{\Omega _{,2}}{\Omega }\right) ^{2}+\frac{1}{2}h_{\alpha \beta
,2}h_{,2}^{\alpha \beta }-2\left( \frac{\Omega _{,2}}{\Omega }\right)
_{,2}=2K_{22}\left( g_{12}\right) ^{4}.  \tag{3.8}  \label{5.42}
\end{equation}%
If we set $\Omega =e^{V}$,\ then the following system of ODE is derived from
$(\ref{5.40})$ and $(\ref{5.42})$ in order to determine the conformity
factor together with $g_{12}$

\begin{equation}
\begin{array}{l}
2V_{,22}=-\left( V_{,2}\right) ^{2}-2K_{22}\left( g_{12}\right) ^{4}+\frac{1%
}{2}h_{\alpha \beta ,2}h_{,2}^{\alpha \beta }, \\
g_{12,2}=\frac{1}{2}g_{12}V_{,2}.%
\end{array}
\tag{3.9}  \label{5.43}
\end{equation}%
Let $T\in \left( 0,\infty \right) $. Assume $h_{\alpha \beta },$ $K_{22}\in
C^{\infty }\left( G_{T}^{1}\right) .$ Take $V_{0}$, $V_{1},$ $W_{0}\in
C^{\infty }\left( \Gamma \right) ,$ where $\Gamma \equiv G_{T}^{1}\cap
G_{T}^{2}$. Then there exists $T_{1}\in \left] 0,T\right] $ such that $%
\left( \ref{5.43}\right) $\ has a unique solution $\left( V,g_{12}\right)
\in C^{\infty }\left( G_{T_{1}}^{1}\right) \times C^{\infty }\left(
G_{T_{1}}^{1}\right) $\ satisfying $V=V_{0}$, $V_{,2}=V_{1}$, $g_{12}=W_{0}$%
\ on $\Gamma $. This follows from known local existence and uniqueness
results concerning non-linear ODE with $C^{\infty }$ data in Banach spaces.

\subsubsection{The condition $g_{22,1}-2g_{12,2}=0$ on $G_{T_{1}}^{1}$}

On $G_{T_{1}}^{1}$,\ the reduced equation $\widetilde{R}_{22}=T_{22}$\ is
equivalent to the following homogenous ODE with unknown $g_{22,1}-2g_{12,2}$
(see \cite{10,17})

\begin{equation}
\left( g^{12}\right) ^{2}g_{12,2}\left( g_{22,1}-2g_{12,2}\right)
-g^{12}\left( g_{22,1}-2g_{12,2}\right) _{,2}=0.  \tag{3.10}  \label{5.44}
\end{equation}%
Assume $g_{22,1}-2g_{12,2}=0$\ on $\Gamma $.\ Then $g_{22,1}-2g_{12,2}=0$\
on $G_{T_{1}}^{1}$ and so $\Gamma ^{1}=0$\ on $G_{T_{1}}^{1}$.

\subsection{Relations $\Gamma ^{\protect\alpha }=0$\ on $G_{T_{1}}^{1}$}

We seek for a combination between $R_{2\alpha }$ and $\Gamma ^{\alpha }$
that will provide an homogenous ODE on $G^{1}$ with unknown $\Gamma ^{\alpha
}$.

\begin{proposition}
On $G_{T}^{1}$,\ it holds that%
\begin{equation}
R_{2\alpha }+\frac{1}{2}g_{\alpha \beta }\Gamma _{,2}^{\beta }+\left(
g^{12}g_{12,2}g_{\alpha \beta }+\frac{1}{2}g_{\alpha \beta ,2}\right) \Gamma
^{\beta }=\psi _{\alpha },  \tag{3.11}  \label{5.45}
\end{equation}%
\begin{equation}
T_{2\alpha }=-\frac{\left( -g_{12}\right) ^{\frac{7}{2}}h_{\alpha 3}\Omega ^{%
\frac{1}{2}}\left( h_{33}\right) ^{-\frac{1}{2}}}{8\sqrt{2}m^{2}}K_{3}-\frac{%
\left( -g_{12}\right) ^{\frac{7}{2}}\left( h_{\alpha 4}\Omega ^{\frac{1}{2}%
}\left( h_{33}\right) ^{\frac{1}{2}}-\Omega h_{\alpha 3}h_{34}\right) }{8%
\sqrt{2}m^{2}}K_{4},  \tag{3.12}  \label{5.46}
\end{equation}%
where%
\begin{equation}
\begin{array}{l}
\psi _{\alpha }=\frac{1}{2}\left( g^{12}\right) g_{12,2}\left[
-2g^{12}g_{12,\alpha }+g^{\mu \theta }\left( 2g_{\alpha \mu ,\theta }-g_{\mu
\theta ,\alpha }\right) \right] \\
\text{ \ \ \ \ \ }+\frac{1}{4}g_{\alpha \beta ,2}\ \left[ -2g^{\beta \lambda
}g^{12}g_{12,\lambda }+g^{\beta \lambda }g^{\mu \theta }\left( 2g_{\lambda
\mu ,\theta }-g_{\mu \theta ,\lambda }\right) \right] \\
\text{ \ \ \ \ \ }+\frac{1}{2}\left( g^{\lambda \beta }g_{\alpha \beta
,2}\right) _{,\lambda }-3\left( g^{12}g_{12,2}\right) _{,\alpha }-\frac{3}{2}%
\left( g^{12}\right) ^{2}g_{12,2}g_{12,\alpha } \\
\text{ \ \ \ \ \ }+\frac{1}{2}g^{12}\left( g_{22,1\alpha }+g_{12,2\alpha
}\right) \\
\text{ \ \ \ \ \ }+\frac{1}{2}g_{,2}^{\beta \lambda }\left( g_{\lambda \beta
,\alpha }+g_{\lambda \alpha ,\beta }\right) +\left( g^{12}\right)
^{2}g_{12,2}g_{12,\alpha } \\
\text{ \ \ \ \ \ }+\frac{1}{2}g_{\alpha \beta }\left[ -2g^{\beta \lambda
}g^{12}g_{12,\lambda }+g^{\beta \lambda }g^{\mu \theta }\left( 2g_{\lambda
\mu ,\theta }-g_{\mu \theta ,\lambda }\right) \right] _{,2}, \\
K_{3}=\int_{B}\varphi \left( x,v\right) \left( 1+v^{2}\right) v^{3}d^{3}v,%
\text{\quad }K_{4}=\int_{B}\varphi \left( x,v\right) \left( 1+v^{2}\right)
v^{4}d^{3}v.%
\end{array}
\tag{3.13}  \label{5.47}
\end{equation}
\end{proposition}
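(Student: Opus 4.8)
The plan is to establish $(\ref{5.45})$ and $(\ref{5.46})$ separately, each by a direct computation on $G_T^1$, where the reductions $(\ref{5.36a})$, $(\ref{5.36b})$, their algebraic consequences $(\ref{5.37})$, and the relations already secured at the earlier stages of the hierarchy — $g_{22,1}=2g_{12,2}$ and $\Gamma^1=0$ — make almost every term collapse.

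For $(\ref{5.45})$ I would start from $(\ref{5.2})$--$(\ref{5.3})$, or more conveniently from the splitting in $(\ref{5.35a})$, i.e. $R_{2\alpha}=\widetilde R_{2\alpha}+\tfrac12\big(g_{2k}\Gamma^k_{,\alpha}+g_{\alpha k}\Gamma^k_{,2}\big)$ with $\widetilde R_{2\alpha}=-\tfrac12 g^{km}g_{2\alpha,mk}+Q_{2\alpha}$. On $G_T^1$, $(\ref{5.37})$ gives $g^{11}=g^{1\alpha}=g^{22}=g^{2\alpha}=0$ and $g^{12}=1/g_{12}$, so $g_{2k}\Gamma^k_{,\alpha}=g_{12}\Gamma^1_{,\alpha}$ and $g_{\alpha k}\Gamma^k_{,2}=g_{\alpha\beta}\Gamma^\beta_{,2}$; moreover $g_{2i}=0$ for $i\neq1$ holds identically along $G^1$, so all of its tangential derivatives vanish and only $g_{2i,1}$, $g_{2i,12}$ survive. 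Hence $-\tfrac12 g^{km}g_{2\alpha,mk}$ reduces to $-g^{12}g_{2\alpha,12}$, while $\Gamma^\beta=g^{kl}\Gamma^\beta_{kl}$ reduces to $\Gamma^\beta=g^{12}g^{\beta\gamma}\big(g_{\gamma2,1}-g_{12,\gamma}\big)+g^{\gamma\delta}\Gamma^\beta_{\gamma\delta}$ (and $\Gamma^1$ similarly, through $g_{22,1}$) — the only place where the still-unconstructed derivatives $g_{\gamma2,1}$ occur. The bookkeeping step is then: adding to $R_{2\alpha}$ the terms $\tfrac12 g_{\alpha\beta}\Gamma^\beta_{,2}$ and $\big(g^{12}g_{12,2}g_{\alpha\beta}+\tfrac12 g_{\alpha\beta,2}\big)\Gamma^\beta$ makes the second-order transverse piece cancel ($-g^{12}g_{2\alpha,12}$ against the $+g^{12}g_{2\alpha,12}$ produced by $\partial_2$ of the $g^{12}g^{\beta\gamma}g_{\gamma2,1}$-part of $g_{\alpha\beta}\Gamma^\beta_{,2}$), absorbs every remaining $g_{\gamma2,1}$ into the coefficient of $\Gamma^\beta$, and kills $\Gamma^1_{,\alpha}$ via $\Gamma^1=0$; the residue is built only from $g_{12}$, $g_{\alpha\beta}=\Omega h_{\alpha\beta}$, their tangential derivatives and $g_{22,1}$, and matching it term by term against $(\ref{5.47})$ closes this part.

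For $(\ref{5.46})$ I would restrict $(\ref{5.30})$ to $G_T^1$ and lower indices with $g$. There $(g_{ij})$ is block, with $g_{11}=g_{22}=g_{1\alpha}=g_{2\alpha}=0$, $g_{\alpha\beta}=\Omega h_{\alpha\beta}$, $\det(h_{\alpha\beta})=1$, so $|g|^{\frac12}=(-g_{12})\Omega$; by $(\ref{5.33})$ one has $\widehat g^{11}=g^{11}+2g^{12}+g^{22}=2/g_{12}$, while the coordinate change $(\ref{5.25})$ gives $\widehat g^{1A}=0$, hence $\widetilde g^{1A}=0$ by $(\ref{5.20})$; consequently $(\widehat g_{AB})$ is block-diagonal on $G_T^1$, $\widehat g_{22}=-g_{12}/2$, $\widehat g_{2\alpha}=0$, $\widehat g_{\alpha\beta}=\Omega h_{\alpha\beta}$, so $|\widetilde g|=(-g_{12}/2)\Omega^2$ and the scalar weight $|g|^{\frac12}(-\widehat g^{11})^{-\frac32}|\widetilde g|^{-\frac12}$ collapses to a pure power of $-g_{12}$. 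It then remains to compute $T^{1\beta}$ with $v^{1\beta}=\tfrac12(1+w^2)w^\beta$ and to pass from the $w$-variables to the ball variables $v$ via $(\ref{5.21})$: since $\widetilde g^{1A}=0$ and the Cholesky-type factors of $(\ref{5.18})$ can be taken on $G_T^1$ as $\lambda^2_2=\sqrt{-g_{12}/2}$, $\lambda^3_3=\sqrt{\Omega h_{33}}$, $\lambda^3_4=h_{34}\sqrt{\Omega/h_{33}}$, $\lambda^4_4=\sqrt{\Omega/h_{33}}$ (the rest zero), one finds $w^2=v^2$ and $w^3,w^4$ as explicit linear combinations of $v^3,v^4$. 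Substituting, lowering $T^{1\beta}$ with $g_{2k}g_{\alpha l}$ (which on $G_T^1$ amounts to multiplication by $g_{12}\Omega h_{\alpha\beta}$), and recognising $\int_B\varphi(1+v^2)v^3\,d^3v=K_3$, $\int_B\varphi(1+v^2)v^4\,d^3v=K_4$, one obtains $(\ref{5.46})$ after collecting powers of $-g_{12}$, $\Omega$ and $h_{33}$.

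The real difficulty is confined to the first part: pinning down, inside the fully expanded $R_{2\alpha}$ — and in particular inside $Q_{2\alpha}$ — exactly which occurrences of the transverse derivatives $g_{i2,1}$ reassemble $\Gamma^\beta$, $\Gamma^\beta_{,2}$ and $\Gamma^1$, so that $(\ref{5.45})$ becomes a genuine linear first-order ODE in $x^2$ along $G^1$ for the pair $(\Gamma^3,\Gamma^4)$ whose source $\psi_\alpha$ depends only on data already constructed. Granting the reductions of $|g|$, $\widehat g^{11}$, $\widehat g_{AB}$ and the $\lambda^A_B$ on $G_T^1$, the derivation of $(\ref{5.46})$ is then purely computational.
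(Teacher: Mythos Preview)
Your plan is essentially the paper's. For $(\ref{5.46})$ the paper proceeds through the covariant integrand $v_{2\alpha}=p_2p_\alpha/(q^1)^2$ of formula (A.1) rather than computing $T^{1\beta}$ and then lowering, but the substance---evaluating $p_2,p_\alpha$ via $(\ref{5.36a})$--$(\ref{5.36b})$, reading off $w^\beta$ from the explicit inverse $(\lambda^A_B)^{-1}$ on $G^1$, and collapsing the weight $|g|^{1/2}(-\widehat g^{11})^{-3/2}|\widetilde g|^{-1/2}$ to $\tfrac12(g_{12})^2$---is identical. For $(\ref{5.45})$ the paper does not redo the Ricci bookkeeping from scratch: it simply quotes the analogous combination (relation (7.50) of \cite{10}), which was derived there under $(\ref{5.36a})$ alone, and specialises it using the extra hypothesis $g_{1\lambda}=0$ of $(\ref{5.36b})$.

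One point of caution: the identity $(\ref{5.45})$ with $\psi_\alpha$ exactly as written in $(\ref{5.47})$ is a purely algebraic consequence of $(\ref{5.36a})$--$(\ref{5.36b})$ on $G_T^1$; it does \emph{not} use $\Gamma^1=0$ or $g_{22,1}=2g_{12,2}$, which are only available on the smaller slab $G_{T_1}^1$ and enter later, when one passes from $(\ref{5.45})$ and $\widetilde R_{2\alpha}=T_{2\alpha}$ to the homogeneous ODE $(\ref{5.49})$. If you kill the $\tfrac12 g_{12}\Gamma^1_{,\alpha}$ contribution by invoking $\Gamma^1=0$ rather than expanding it, you will land on a $\psi_\alpha$ that differs from $(\ref{5.47})$ by precisely those terms---harmless for the hierarchy, but not literally the stated proposition.
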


\begin{proof}
See appendix B.
\end{proof}

The relations $\Gamma ^{3}=\Gamma ^{4}=0$\ on $G^{1}$ is to be arranged
under a suitable choice of the distribution function on $\widehat{G^{1}}%
=G^{1}\times B$.\ It is at this level that the Rendall method need to be
modified. Assume that the distribution function $\varphi $ is such that%
\begin{equation}
T_{2\alpha }=\psi _{\alpha }\text{ on }G_{T_{1}}^{1}.  \tag{3.14}
\label{5.48}
\end{equation}%
Then the reduced system $\widetilde{R}_{2\alpha }=T_{2\alpha }$\ is
equivalent to the following homogenous system of ODE on $G_{T_{1}}^{1}$\
with unknown $\left( \Gamma ^{3},\Gamma ^{4}\right) $

\begin{equation}
\begin{array}{l}
g_{3\beta }\Gamma _{,2}^{\beta }+\left( g^{12}g_{12,2}g_{3\beta }+\frac{1}{2}%
g_{3\beta ,2}\right) \Gamma ^{\beta }=0, \\
g_{4\beta }\Gamma _{,2}^{\beta }+\left( g^{12}g_{12,2}g_{4\beta }+\frac{1}{2}%
g_{4\beta ,2}\right) \Gamma ^{\beta }=0.%
\end{array}
\tag{3.15}  \label{5.49}
\end{equation}%
Assumption $\left( \ref{5.48}\right) $ is an integral system on $%
G_{T_{1}}^{1}$ in the sense that it is written explicitly as follows

\begin{equation}
A_{\alpha }U+B_{\alpha }V=\psi _{\alpha },  \tag{3.16}  \label{5.50}
\end{equation}%
where

\begin{equation}
\begin{array}{l}
A_{\alpha }=-\frac{\left( -g_{12}\right) ^{\frac{7}{2}}h_{\alpha 3}\Omega ^{%
\frac{1}{2}}\left( h_{33}\right) ^{-\frac{1}{2}}}{8\sqrt{2}m^{2}},\quad
B_{\alpha }=-\frac{\left( -g_{12}\right) ^{\frac{7}{2}}\Omega ^{\frac{1}{2}%
}\left( h_{\alpha 4}\left( h_{33}\right) ^{\frac{1}{2}}-\Omega ^{\frac{1}{2}%
}h_{\alpha 3}h_{34}\right) }{8\sqrt{2}m^{2}} \\
U=\int_{B}\varphi \left( x,v\right) \left( 1+v^{2}\right) v^{3}d^{3}v,\quad
V=\int_{B}\varphi \left( x,v\right) \left( 1+v^{2}\right) v^{4}d^{3}v.%
\end{array}
\tag{3.17}  \label{5.51}
\end{equation}%
The determinant of the system $\left( \ref{5.50}\right) $ is equal to $\frac{%
\left( -g_{12}\right) ^{7}\Omega }{128m^{4}}$ on $G_{T_{1}}^{1}$. So the
solutions are given by the relations below

\begin{equation}
\begin{array}{c}
U=\frac{128m^{4}}{\left( g_{12}\right) ^{7}\Omega }\left( \psi
_{4}B_{3}-\psi _{3}B_{4}\right) , \\
V=\frac{128m^{4}}{\left( g_{12}\right) ^{7}\Omega }\left( \psi
_{3}A_{4}-\psi _{4}A_{3}\right) .%
\end{array}
\tag{3.18}  \label{5.52}
\end{equation}%
In brief $\varphi $ has to be chosen in such a way that the following
integral system $\left( IS\right) $ holds for every $x\in $ $G_{T_{1}}^{1}$

\begin{equation}
\begin{array}{c}
\int_{B}\varphi \left( x,v\right) \left( 1+v^{2}\right) v^{3}d^{3}v=\frac{%
128m^{4}}{\left( g_{12}\right) ^{7}\Omega }\left( \psi _{4}B_{3}-\psi
_{3}B_{4}\right) , \\
\int_{B}\varphi \left( x,v\right) \left( 1+v^{2}\right) v^{4}d^{3}v=\frac{%
128m^{4}}{\left( g_{12}\right) ^{7}\Omega }\left( \psi _{3}A_{4}-\psi
_{4}A_{3}\right) .%
\end{array}
\tag{IS}
\end{equation}%
\

Assume $\Gamma ^{\beta }=0$\ on $\Gamma $.\ Then, in view of $\left( \ref%
{5.49}\right) $, $\Gamma ^{\beta }=0$\ on $G_{T_{1}}^{1}$.

\subsection{Relation $\Gamma ^{2}=0$\ on $G_{T_{1}}^{1}$}

We seek for a combination of $g^{\alpha \beta }R_{\alpha \beta }$, $\Gamma
^{2}$ and $\Gamma _{,2}^{2}$\ that will provide an homogenous ODE on $%
G_{T_{1}}^{1}$\ with unknown $\Gamma ^{2}$.

\subsubsection{Combination of $g^{\protect\alpha \protect\beta }R_{\protect%
\alpha \protect\beta }$, $\Gamma ^{2}$\ and $\Gamma _{,2}^{2}$}

\begin{proposition}
On $G_{T_{1}}^{1}$\ the following combination holds%
\begin{equation}
g^{\alpha \beta }R_{\alpha \beta }-2\Gamma _{,2}^{2}-2g^{12}g_{12,2}\Gamma
^{2}=\frac{1}{4}g^{\alpha \beta }\left( N_{\alpha \beta }+M_{\alpha \beta
}\right) ,  \tag{3.21}  \label{5.55}
\end{equation}%
where%
\begin{equation}
\begin{array}{l}
N_{\alpha \beta }=-g^{12}g^{\lambda \mu }g_{2\lambda ,1}\left( g_{\beta \mu
,\alpha }+g_{\mu \alpha ,\beta }-g_{\alpha \beta ,\mu }\right) \\
+g^{12}\left( g_{2\beta ,1\alpha }+g_{2\alpha ,1\beta }\right) -\left[
2g^{12}g_{12,\alpha }+g^{\lambda \mu }\left( g_{\mu \lambda ,\alpha }+g_{\mu
\alpha ,\lambda }-g_{\alpha \lambda ,\mu }\right) \right] _{,\beta },%
\end{array}
\tag{3.22}  \label{5.56}
\end{equation}%
\begin{equation}
\begin{array}{l}
M_{\alpha \beta }=\left[ 2g^{12}g_{12,\lambda }+g^{\mu \theta }\left( g_{\mu
\theta ,\lambda }+g_{\theta \lambda ,\mu }-g_{\mu \lambda ,\theta }\right) %
\right] \left[ g^{\mu \lambda }\left( g_{\mu \beta ,\alpha }+g_{\mu \alpha
,\beta }-g_{\alpha \beta ,\mu }\right) \right] \\
-\left( g^{12}\right) ^{2}\left( g_{12,\beta }+g_{2\beta ,1}\right) \left(
g_{12,\alpha }+g_{2\alpha ,1}\right) \\
-\left[ g^{12}\left( g_{12,\beta }-g_{2\beta ,1}\right) \right] \left[
g^{12}\left( g_{12,\alpha }-g_{2\alpha ,1}\right) \right] \\
-\left[ g^{\theta \mu }\left( g_{\theta \lambda ,\beta }+g_{\theta \beta
,\lambda }-g_{\lambda \beta ,\theta }\right) \right] \left[ g^{\delta
\lambda }\left( g_{\delta \mu ,\alpha }+g_{\delta \alpha ,\mu }-g_{\alpha
\mu ,\delta }\right) \right] .%
\end{array}
\tag{3.23}  \label{5.57}
\end{equation}
\end{proposition}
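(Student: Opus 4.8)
The plan is to establish $\left( \ref{5.55}\right) $ by a direct expansion of $g^{\alpha \beta }R_{\alpha \beta }$ in Christoffel symbols, followed by a systematic reduction of every resulting term on $G_{T_{1}}^{1}$, using the standing assumptions $\left( \ref{5.36a}\right) $, $\left( \ref{5.36b}\right) $, $\left( \ref{5.37}\right) $ together with the gauge relations $\Gamma ^{1}=\Gamma ^{3}=\Gamma ^{4}=0$ already arranged on $G_{T_{1}}^{1}$ in the earlier subsections (cf. $\left( \ref{5.40}\right) $, $\left( \ref{5.44}\right) $, $\left( \ref{5.49}\right) $).

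First I would write $R_{\alpha \beta }$ in the coordinates $\left( x^{i}\right) $ exactly as in $\left( \ref{5.2}\right) $, namely $R_{\alpha \beta }=\Gamma _{\alpha \beta ,k}^{k}-\Gamma _{\alpha k,\beta }^{k}+\Gamma _{kl}^{k}\Gamma _{\alpha \beta }^{l}-\Gamma _{\beta l}^{k}\Gamma _{\alpha k}^{l}$, contract with $g^{\alpha \beta }$, and reduce each piece on $G_{T_{1}}^{1}$. The decisive bookkeeping device is that on $G_{T_{1}}^{1}$ the components $g_{11}$, $g_{1\alpha }$, $g_{22}$, $g_{2\alpha }$ vanish identically, hence so do their tangential derivatives $\partial _{2},\partial _{3},\partial _{4}$, whereas their transverse derivatives $\partial _{1}$ (such as $g_{2\beta ,1}$) need not; combined with $g^{11}=g^{1\alpha }=g^{22}=g^{2\alpha }=0$, $g^{12}g_{12}=1$, and the fact that $g^{\alpha \beta }$ is the inverse of $\left( g_{\alpha \beta }\right) $, this lets one replace every lowered Christoffel symbol by its value on $G_{T_{1}}^{1}$; in particular $\Gamma _{\alpha \beta }^{2}=-\tfrac{1}{2}g^{12}g_{\alpha \beta ,1}$ and $\Gamma _{12}^{2}=0$ there, so that $\Gamma ^{2}=g^{kl}\Gamma _{kl}^{2}=-\tfrac{1}{2}g^{12}g^{\alpha \beta }g_{\alpha \beta ,1}$ on $G_{T_{1}}^{1}$.

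I would then split the contracted, reduced expression into the terms carrying a factor $\Gamma ^{2}$ or $\Gamma _{,2}^{2}$ and the purely metric remainder. The would-be contributions of $\Gamma ^{1},\Gamma ^{3},\Gamma ^{4}$ all drop out, either because these vanish on $G_{T_{1}}^{1}$ together with their tangential derivatives, or because the metric coefficients that would multiply them ($g_{2\alpha }$ and the like) vanish there; only $\Gamma ^{2}$ and $\Gamma _{,2}^{2}$ can survive. One checks that the gauge part assembles into exactly $2\Gamma _{,2}^{2}+2g^{12}g_{12,2}\Gamma ^{2}$, which is why $\left( \ref{5.55}\right) $ subtracts this combination on the left: the $\Gamma _{,2}^{2}$ originates from the mixed second-derivative term $g^{12}g^{\alpha \beta }g_{\alpha \beta ,12}$ inside $g^{\alpha \beta }\Gamma _{\alpha \beta ,k}^{k}$ (through $\Gamma _{,2}^{2}=-\tfrac{1}{2}\left( g^{12}g^{\alpha \beta }g_{\alpha \beta ,1}\right) _{,2}$), the undifferentiated $\Gamma ^{2}$ from the product term $g^{\alpha \beta }\Gamma _{k2}^{k}\Gamma _{\alpha \beta }^{2}$ together with the lower-order pieces of that $x^{2}$-derivative, and the weight $g^{12}g_{12,2}$ is pinned down by $\left( \ref{5.40}\right) $ and $g_{22,1}=2g_{12,2}$. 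Finally the purely metric remainder is organized by type: the second-order-derivative terms --- those producing $g^{12}\left( g_{2\beta ,1\alpha }+g_{2\alpha ,1\beta }\right) $, the product $-g^{12}g^{\lambda \mu }g_{2\lambda ,1}\left( g_{\beta \mu ,\alpha }+g_{\mu \alpha ,\beta }-g_{\alpha \beta ,\mu }\right) $, and the derivative $\left[ 2g^{12}g_{12,\alpha }+g^{\lambda \mu }\left( g_{\mu \lambda ,\alpha }+g_{\mu \alpha ,\lambda }-g_{\alpha \lambda ,\mu }\right) \right] _{,\beta }$ --- are collected into $\tfrac{1}{4}g^{\alpha \beta }N_{\alpha \beta }$ as in $\left( \ref{5.56}\right) $, while the terms quadratic in first derivatives coming from $\Gamma _{kl}^{k}\Gamma _{\alpha \beta }^{l}-\Gamma _{\beta l}^{k}\Gamma _{\alpha k}^{l}$, after separating everywhere the $l=1,2$ pieces (which generate the $g_{12,\beta }$, $g_{2\beta ,1}$ combinations) from the angular ones, are collected into $\tfrac{1}{4}g^{\alpha \beta }M_{\alpha \beta }$ as in $\left( \ref{5.57}\right) $. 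Matching term by term yields $\left( \ref{5.55}\right) $.

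I expect the main obstacle to be purely combinatorial: correctly distinguishing tangential from transverse derivatives of the vanishing components, tracking the many Christoffel products, and regrouping the resulting pure-metric terms into the precise shapes of $N_{\alpha \beta }$ and $M_{\alpha \beta }$. The genuinely delicate point is verifying that the gauge-dependent part collapses to exactly $2\Gamma _{,2}^{2}+2g^{12}g_{12,2}\Gamma ^{2}$ --- in particular that the numerical coefficient is $2$ and the weight is $g^{12}g_{12,2}$ --- which is what makes the left-hand combination in $\left( \ref{5.55}\right) $ the right one to form; this relies on invoking $\left( \ref{5.40}\right) $ and $g_{22,1}=2g_{12,2}$ at the correct stage. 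These calculations, lengthy but routine, are carried out in Appendix C.
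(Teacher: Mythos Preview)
Your plan is sound and is, in substance, the same computation the paper relies on: the paper's proof of this proposition is a one-line citation --- it says the identity follows from relations $(7.58$--$61)$ of \cite{10} upon invoking the extra assumption $\left(\ref{5.36b}\right)$ and its consequence $\left(\ref{5.37}\right)$. What you sketch (expand $R_{\alpha\beta}$ via $\left(\ref{5.2}\right)$, reduce every Christoffel symbol on $G_{T_{1}}^{1}$ using that $g_{11},g_{1\alpha},g_{22},g_{2\alpha}$ and their tangential derivatives vanish while transverse derivatives such as $g_{2\beta,1}$ survive, then separate the $\Gamma^{2},\Gamma^{2}_{,2}$ part from the pure-metric remainder) is exactly the content of those cited relations, specialized here by the additional hypothesis $\left(\ref{5.36b}\right)$. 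So you are not taking a different route; you are filling in what the paper outsources.

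Two small remarks. First, the paper's proof statement invokes only $\left(\ref{5.36b}\right)$ and $\left(\ref{5.37}\right)$, not $\Gamma^{3}=\Gamma^{4}=0$; the identity $\left(\ref{5.55}\right)$ is meant as a metric identity on $G_{T_{1}}^{1}$, with any would-be $\Gamma^{\alpha}$ contributions already absorbed into the explicit metric expressions $N_{\alpha\beta},M_{\alpha\beta}$ rather than set to zero. Your use of $\Gamma^{\alpha}=0$ is harmless at this point of the hierarchy, but it is a shortcut relative to the derivation in \cite{10}. Second, your closing reference to ``Appendix~C'' does not match the paper: the paper gives no appendix for this proposition (it simply cites \cite{10}), and its Appendix~C treats Proposition~9 instead.
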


\begin{proof}
It follows from relations $\left( 7.58-61\right) $ of \cite{10} by using $%
\left( \ref{5.36b}\right) $ and $\left( \ref{5.37}\right) $.
\end{proof}

\subsubsection{Expression of $g^{\protect\alpha \protect\beta }T_{\protect%
\alpha \protect\beta }$}

\begin{proposition}
On $G_{T_{1}}^{1}$\ it holds that%
\begin{equation}
\begin{array}{l}
g^{\alpha \beta }T_{\alpha \beta }=\frac{\left( -g_{12}\right) ^{3}}{8m^{2}}%
\int_{B}\varphi \left( x,v\right) \left( v^{3}\right) ^{2}d^{3}v \\
\text{ \ \ \ \ \ \ \ \ \ \ \ \ }+\frac{\left( -g_{12}\right) ^{3}h_{34}}{%
4m^{2}}\left( 1-\left( \Omega h_{33}\right) ^{\frac{1}{2}}\right)
\int_{B}\varphi \left( x,v\right) v^{3}v^{4}d^{3}v \\
\text{ \ \ \ \ \ \ \ \ \ \ \ \ }+\frac{\left( -g_{12}\right) ^{3}}{8m^{2}}%
\left[ \left( h_{34}\right) ^{2}\left( \Omega h_{33}\right) ^{\frac{1}{2}%
}\left( \left( \Omega h_{33}\right) ^{\frac{1}{2}}-2\right) +h_{44}h_{33}%
\right] \int_{B}\varphi \left( x,v\right) \left( v^{4}\right) ^{2}d^{3}v.%
\end{array}
\tag{3.24}  \label{5.58}
\end{equation}
\end{proposition}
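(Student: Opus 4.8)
Here is how I would attack the statement.

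The plan is to start from the closed form (2.28) of the contravariant stress--energy tensor in the chart $(x^{i},v^{A})$ and to reduce the contraction $g^{\alpha\beta}T_{\alpha\beta}$ step by step, systematically using the relations (3.1)--(3.3), which hold on all of $G_{T}^{1}$ and hence on $G_{T_{1}}^{1}$. The key structural observation is that on $G_{T_{1}}^{1}$ the four--metric $g_{ij}$ is block diagonal: in the $1$--$2$ plane only $g_{12}$ survives, and the $3$--$4$ block is $g_{\alpha\beta}=\Omega h_{\alpha\beta}$ with $\det(h_{\alpha\beta})=1$. Thus $g_{\alpha 1}=g_{\alpha 2}=0$, so lowering a spatial index uses only the spatial block, and by (3.3) the $2\times 2$ matrices $(g^{\alpha\beta})$ and $(g_{\alpha\beta})$ are mutually inverse; consequently $g^{\alpha\beta}T_{\alpha\beta}=g^{\alpha\beta}g_{\alpha\gamma}g_{\beta\delta}T^{\gamma\delta}=g_{\gamma\delta}T^{\gamma\delta}$. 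Inserting (2.28), what remains is to evaluate $g_{\gamma\delta}T^{\gamma\delta}=\frac{1}{2m^{2}}\int_{B}\varphi(x,v)\,\bigl(g_{\gamma\delta}v^{\gamma\delta}\bigr)\,|g|^{1/2}(-\widehat{g}^{11})^{-3/2}|\widetilde{g}|^{-1/2}\,d^{3}v$ on $G_{T_{1}}^{1}$.

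The second step is the scalar weight $|g|^{1/2}(-\widehat{g}^{11})^{-3/2}|\widetilde{g}|^{-1/2}$, which I would compute on $G_{T_{1}}^{1}$ from the block form. One gets $|g|=\Omega^{2}(g_{12})^{2}$; from (2.30) together with (3.3), $\widehat{g}^{11}=g^{11}+2g^{12}+g^{22}=2/g_{12}$, so $(-\widehat{g}^{11})^{-3/2}=(-g_{12})^{3/2}2^{-3/2}$; and $|\widetilde{g}|=\det(\widehat{g}_{AB})$, which on $G_{T_{1}}^{1}$ is block diagonal with $(2,2)$--entry $-\tfrac12 g_{12}$ and spatial part $\Omega h_{\alpha\beta}$, hence $|\widetilde{g}|=\tfrac12(-g_{12})\Omega^{2}$. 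Multiplying, the weight collapses to $\tfrac12(-g_{12})^{2}$, so the integrand prefactor is $\dfrac{(-g_{12})^{2}}{4m^{2}}$. (The same computation with $v^{11}=\tfrac14(1+w^{2})^{2}$ in place of $v^{\gamma\delta}$ reproduces $T_{22}=(g_{12})^{4}K_{22}$ of Proposition (3.4), a convenient consistency check.)

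The third step — the main one — is the momentum factor. Since $x^{\gamma}=y^{\gamma}$ we have $p^{\gamma}=q^{\gamma}$, so $v^{\gamma\delta}=p^{\gamma}p^{\delta}/(q^{1})^{2}=w^{\gamma}w^{\delta}$; and on $G_{T_{1}}^{1}$ the mixed contravariant components $\widetilde{g}^{1A}$ vanish (from (3.1), (3.2) and the transformation rules (2.25)), so (2.20) reduces to $w^{A}=(\lambda_{B}^{A})^{-1}v^{B}(-\widehat{g}^{11})^{-1/2}$ with $(-\widehat{g}^{11})^{-1/2}=((-g_{12})/2)^{1/2}$. Evaluating the decomposition of Assumption $\widehat{h}^{\prime}$ for the spatial metric $\Omega h_{\alpha\beta}$ (and using $h_{33}h_{44}-h_{34}^{2}=1$) one obtains $w^{2}=v^{2}$, together with $w^{3}=\dfrac{(-g_{12})^{1/2}}{\sqrt{2}}\Bigl((\Omega h_{33})^{-1/2}v^{3}-h_{34}v^{4}\Bigr)$ and $w^{4}=\dfrac{(-g_{12})^{1/2}(h_{33})^{1/2}}{\sqrt{2}\,\Omega^{1/2}}\,v^{4}$; in particular $v^{2}$ drops out of $v^{\gamma\delta}$. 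One then forms $g_{\gamma\delta}w^{\gamma}w^{\delta}=\Omega\bigl[h_{33}(w^{3})^{2}+2h_{34}w^{3}w^{4}+h_{44}(w^{4})^{2}\bigr]$, expands, and collects the coefficients of $(v^{3})^{2}$, $v^{3}v^{4}$ and $(v^{4})^{2}$; repeated use of $\det h=1$ and of the abbreviation $(\Omega h_{33})^{1/2}$ brings those, after multiplication by $\dfrac{(-g_{12})^{2}}{4m^{2}}$, to precisely the three integral terms of (3.24).

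The routine ingredients are the block--matrix determinant identities and the final algebraic collection; the delicate point is pinning down the restriction of $\lambda_{B}^{A}$ to $G_{T_{1}}^{1}$ and then expanding $g_{\gamma\delta}w^{\gamma}w^{\delta}$ without losing signs or powers of $(-g_{12})$, $\Omega^{1/2}$, $(h_{33})^{\pm 1/2}$. This is straightforward but lengthy and is naturally relegated to an appendix. An equivalent route, parallel to the treatment of $R_{2\alpha},T_{2\alpha}$ in the previous proposition, is to compute $T_{33},T_{34},T_{44}$ individually from (2.28) and then contract with $g^{33}=h_{44}/\Omega$, $g^{34}=-h_{34}/\Omega$, $g^{44}=h_{33}/\Omega$ from (3.3); the bookkeeping is of the same order.
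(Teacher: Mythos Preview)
Your proposal is correct and follows essentially the same route as the paper's Appendix~C: starting from the stress--energy tensor in the $(x,v)$ chart, reducing the weight $|g|^{1/2}(-\widehat{g}^{11})^{-3/2}|\widetilde{g}|^{-1/2}$ to $\tfrac12(g_{12})^{2}$ on $G^{1}$ (this is the paper's (A.17)), expressing $w^{3},w^{4}$ in terms of $v^{3},v^{4}$ via the $\lambda$-matrix (your formulas are exactly the paper's (B.4)), and then expanding the quadratic form $g_{\gamma\delta}w^{\gamma}w^{\delta}=\Omega h_{\gamma\delta}w^{\gamma}w^{\delta}$. The only cosmetic difference is that the paper works from the covariant formula (A.1) for $T_{\alpha\beta}$ and contracts afterwards with $g^{\alpha\beta}$, whereas you invoke the block structure to pass to $g_{\gamma\delta}T^{\gamma\delta}$ and use the contravariant (2.28); both land on $\tfrac{(g_{12})^{2}}{4m^{2}}\int_{B}\varphi\,g_{\gamma\delta}w^{\gamma}w^{\delta}\,d^{3}v$ (the paper's (C.3)) and the remaining expansion is identical.
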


\begin{proof}
See appendix C.
\end{proof}

In addition to assumption $\left( \ref{5.48}\right) $, assume%
\begin{equation}
g^{\alpha \beta }T_{\alpha \beta }=\frac{1}{4}g^{\alpha \beta }\left(
N_{\alpha \beta }+M_{\alpha \beta }\right) \text{ on }G_{T_{1}}^{1}.
\tag{3.25}  \label{5.59}
\end{equation}%
Then, in\ view\ of\ $\left( \ref{5.55}\right) $, the reduced system $%
\widetilde{R}_{\alpha \beta }=T_{\alpha \beta }$ provides the following
homogenous ODE on $G_{T_{1}}^{1}$\ with unknown $\Gamma ^{2}$%
\begin{equation}
-2\Gamma _{,2}^{2}-2g^{12}g_{12,2}\Gamma ^{2}=0.  \tag{3.26}  \label{5.60}
\end{equation}%
The assumption $\left( \ref{5.59}\right) $\ is a supplementary integral
equation\ which is written explicitly as follows, $\forall x\in $\ $%
G_{T_{1}}^{1}$

\begin{equation}
\begin{array}{l}
\frac{\left( -g_{12}\right) ^{3}}{8m^{2}}\int_{B}\varphi \left( x,v\right)
\left( v^{3}\right) ^{2}d^{3}v+\frac{\left( -g_{12}\right) ^{3}h_{34}}{4m^{2}%
}\left( 1-\left( \Omega h_{33}\right) ^{\frac{1}{2}}\right) \int_{B}\varphi
\left( x,v\right) v^{3}v^{4}d^{3}v \\
+\frac{\left( -g_{12}\right) ^{3}}{8m^{2}}\left[ \left( h_{34}\right)
^{2}\left( \Omega h_{33}\right) ^{\frac{1}{2}}\left( \left( \Omega
h_{33}\right) ^{\frac{1}{2}}-2\right) +h_{44}h_{33}\right] \int_{B}\varphi
\left( x,v\right) \left( v^{4}\right) ^{2}d^{3}v \\
=\frac{1}{4}g^{\alpha \beta }\left( N_{\alpha \beta }+M_{\alpha \beta
}\right) .%
\end{array}
\tag{IE}
\end{equation}%
In view of $\left( \ref{5.60}\right) $, assuming $\Gamma ^{2}=0$\ on $\Gamma
$ gives $\Gamma ^{2}=0$\ on $G_{T_{1}}^{1}$.

What we\ have just proved for the resolution of the constraints problem
associated to the EV system can be summed up in the following main theorem.

\begin{theorem}
Under the suitable integral assumptions $\left( IS\right) $ and $\left(
IE\right) $\ on the distribution function, there exists initial data for the
reduced EV system such that the constraints $\Gamma ^{k}=0$\ are satisfied
on $G^{1}\cup G^{2}$ for the corresponding solution of the evolution problem
associated to the EV system.
\end{theorem}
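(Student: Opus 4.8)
The plan is to assemble the three-level hierarchy of Sections 3.1--3.3 on $G^{1}$, to mirror it on $G^{2}$, and to match the two constructions along the corner $\Gamma=G_{T}^{1}\cap G_{T}^{2}$. First I would fix a standard harmonic coordinate system, which exists by \cite{17}, and impose on $G_{T}^{1}$ the algebraic ansatz $\left(\ref{5.36a}\right)$ together with the additional assumption $\left(\ref{5.36b}\right)$, so that the simplifications $\left(\ref{5.37}\right)$ hold there. At the first level one prescribes as free data the matrix $\left(h_{\alpha\beta}\right)$ of unit determinant and the function $K_{22}$, chooses corner functions $V_{0},V_{1},W_{0}\in C^{\infty}\left(\Gamma\right)$ with $g_{22,1}-2g_{12,2}=0$ on $\Gamma$, and solves the nonlinear ODE system $\left(\ref{5.43}\right)$ --- which is the reduced equation $\widetilde{R}_{22}=T_{22}$ rewritten using the ansatz, the tentative relations $g_{22,1}=2g_{12,2}$ and $\Gamma^{1}=0$, and the relation $\left(\ref{5.40}\right)$ --- on a slab $G_{T_{1}}^{1}$ with $T_{1}\in\left(0,T\right]$, obtaining $\Omega=e^{V}$, $g_{\alpha\beta}=\Omega h_{\alpha\beta}$ and $g_{12}$. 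On $G_{T_{1}}^{1}$ the equation $\widetilde{R}_{22}=T_{22}$ is then equivalent to the homogeneous linear ODE $\left(\ref{5.44}\right)$ for $g_{22,1}-2g_{12,2}$, so the vanishing corner datum propagates and yields $g_{22,1}=2g_{12,2}$, hence $\Gamma^{1}=0$, on $G_{T_{1}}^{1}$.

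At the second level, assume $\varphi$ satisfies the integral system $\left(IS\right)$ on $G_{T_{1}}^{1}$; by $\left(\ref{5.46}\right)$ and $\left(\ref{5.52}\right)$ this is exactly assumption $\left(\ref{5.48}\right)$, that is $T_{2\alpha}=\psi_{\alpha}$, so that, using $\Gamma^{1}=0$ and the combination $\left(\ref{5.45}\right)$, the reduced equations $\widetilde{R}_{2\alpha}=T_{2\alpha}$ collapse to the homogeneous linear first-order ODE system $\left(\ref{5.49}\right)$ for $\left(\Gamma^{3},\Gamma^{4}\right)$; imposing $\Gamma^{\beta}=0$ on $\Gamma$ and integrating in $x^{2}$ gives $\Gamma^{3}=\Gamma^{4}=0$ on $G_{T_{1}}^{1}$. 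At the third level, require in addition that $\varphi$ satisfy the integral equation $\left(IE\right)$, equivalently $\left(\ref{5.59}\right)$; combined with $\left(\ref{5.55}\right)$, the expression $\left(\ref{5.58}\right)$ for $g^{\alpha\beta}T_{\alpha\beta}$, and the already-established $\Gamma^{1}=\Gamma^{3}=\Gamma^{4}=0$, the trace of the reduced Einstein equations becomes the homogeneous ODE $\left(\ref{5.60}\right)$ for $\Gamma^{2}$, whose vanishing corner datum propagates to $\Gamma^{2}=0$ on $G_{T_{1}}^{1}$. Thus all four $\Gamma^{k}$ vanish on $G_{T_{1}}^{1}$.

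The construction on $G^{2}$ is obtained from the one above under the index exchange $1\leftrightarrow 2$: the ansatz $\left(\ref{5.36a}\right)$ becomes $g_{11}=g_{13}=g_{14}=0$, the additional assumption becomes $g_{22}=g_{23}=g_{24}=0$, and the relation to be established becomes $g_{11,2}=2g_{12,1}$; this yields $\Gamma^{k}=0$ on $G_{T_{2}}^{2}$ for some $T_{2}\in\left(0,T\right]$. Put $T_{\ast}=\min\left\{T_{1},T_{2}\right\}$. Choosing on both sides the same corner functions on $\Gamma$ and the same values there of the moments of $\varphi$ appearing in $K_{22}$, $\left(IS\right)$ and $\left(IE\right)$, the two data sets agree on $\Gamma=G_{T_{\ast}}^{1}\cap G_{T_{\ast}}^{2}$ and therefore define a single $C^{\infty}$ initial data set for the reduced EV system on $G^{1}\cup G^{2}$ restricted to the slab, for which $\Gamma^{k}=0$ holds identically on $G^{1}\cup G^{2}$. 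This is precisely the hypothesis that property $\left(ii\right)$ of the introduction places on the initial hypersurfaces, so the corresponding solution of the evolution problem satisfies the constraints $\Gamma^{k}=0$ on $G^{1}\cup G^{2}$, which proves the theorem.

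I expect the main obstacle to be the corner compatibility at $\Gamma$: each level of the hierarchy, on both $G^{1}$ and $G^{2}$, prescribes its own Cauchy datum on $\Gamma$ for an ODE transverse to $\Gamma$ inside the respective hypersurface, and one must verify that all these corner conditions --- on $g_{12}$, $V$, $V_{,2}$, on $g_{22,1}-2g_{12,2}$ and $g_{11,2}-2g_{12,1}$, on $\Gamma^{\beta}$ and $\Gamma^{2}$, and on the moments of $\varphi$ entering $K_{22}$, $\left(IS\right)$ and $\left(IE\right)$ --- can be met simultaneously by one coherent choice of free data that is consistent with the algebraic assumptions $\left(\ref{5.36a}\right)$ and $\left(\ref{5.36b}\right)$ and their mirrors holding on $G^{1}$ and $G^{2}$ at once. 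The remaining delicate point, namely the actual solvability for $\varphi$ of the coupled integral constraints $\left(IS\right)$, $\left(IE\right)$ and the moment relation defining $K_{22}$ --- whose right-hand sides themselves involve the metric built at the first two levels --- is assumed in the hypotheses of the theorem and is handled separately in appendix D, so it does not intervene in the present argument.
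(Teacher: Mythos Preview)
Your proposal is correct and follows essentially the same three-level hierarchical argument as the paper: solve $\left(\ref{5.43}\right)$ to construct $\Omega$, $g_{\alpha\beta}$, $g_{12}$ and propagate $g_{22,1}-2g_{12,2}=0$ via $\left(\ref{5.44}\right)$ to get $\Gamma^{1}=0$; use $\left(IS\right)$ to reduce $\widetilde{R}_{2\alpha}=T_{2\alpha}$ to the homogeneous system $\left(\ref{5.49}\right)$ for $\Gamma^{\alpha}$; use $\left(IE\right)$ to reduce the trace equation to $\left(\ref{5.60}\right)$ for $\Gamma^{2}$; then mirror on $G^{2}$. Your additional care about corner compatibility at $\Gamma$ and the consistency of the moment constraints is a welcome elaboration of points the paper leaves largely implicit.
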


\begin{remark}
It would be interesting to see whether the constraints integral equations
can be avoided. One way of doing this is to work in temporal gauge and null
moving frame (see \cite{15}).\ But in the harmonic gauge case the issue may
not be evident. Nevertheless we think that one could use an orthonormal
frame in order to avoid that the metric appears (in an involved way) in the
energy-momentum tensor (see \cite{16}).
\end{remark}

\section*{Appendix A: Proof of Proposition 6}

The first equality of $\left( \ref{5.38}\right) $\ is provided in relation $%
\left( 7.42\right) $ of \cite{10}. We handle the second one by using the
expression $\left( \ref{5.30}\right) $ of the energy-momentum tensor given
in Proposition 5 to have
\begin{equation}
T_{ij}\left( x\right) =\frac{1}{2m^{2}}\int_{B}\varphi \left( x,v\right)
v_{ij}\left\vert g\right\vert ^{\frac{1}{2}}\left( -\widehat{g}^{11}\right)
^{-\frac{3}{2}}\left\vert \widetilde{g}\right\vert ^{-\frac{1}{2}}d^{3}v,
\tag{A.1}
\end{equation}%
where $v_{ij}=\frac{p_{i}p_{j}}{\left( q^{1}\right) ^{2}}$. For $i=j=2$ $%
\left( A.1\right) $\ reads
\begin{equation}
T_{22}\left( x\right) =\frac{1}{2m^{2}}\int_{B}\varphi \left( x,v\right)
v_{22}\left\vert g\right\vert ^{\frac{1}{2}}\left( -\widehat{g}^{11}\right)
^{-\frac{3}{2}}\left\vert \widetilde{g}\right\vert ^{-\frac{1}{2}}d^{3}v.
\tag{A.2}
\end{equation}%
From $\left( \ref{5.36a}\right) $ we gain%
\begin{equation}
p_{2}=\frac{1}{2}g_{12}q^{1}\left( 1+w^{2}\right) \text{ on }G^{1}.
\tag{A.3}
\end{equation}%
Thus
\begin{equation}
v_{22}=\frac{1}{4}\left( g_{12}\right) ^{2}\left( 1+w^{2}\right) ^{2}\text{
on }G^{1}.  \tag{A.4}
\end{equation}%
$w^{2}$ is now expressed on $G^{1}$\ in terms of $v^{A}$ via $\left( \ref%
{5.21}\right) $ to give%
\begin{equation}
w^{2}=\left( \lambda _{B}^{2}\right) ^{-1}v^{B}\left( -\widehat{g}%
^{11}\right) ^{-\frac{1}{2}}-\widetilde{g}^{12}.  \tag{A.5}
\end{equation}%
The exact expression of $\left( \lambda _{B}^{A}\right) $ is needed.
Splitting the quadratic form $\left( \ref{5.18}\right) $ yields
\begin{equation}
\begin{array}{l}
\widehat{g}_{AB}X^{A}X^{B} \\
=\left[ \left( \widehat{g}_{22}\right) ^{\frac{1}{2}}X^{2}+\left( \widehat{g}%
_{22}\right) ^{-\frac{1}{2}}\widehat{g}_{2\alpha }X^{\alpha }\right] ^{2} \\
+\left[ \left( \widehat{g}_{33}-\left( \widehat{g}_{22}\right) ^{-1}\left(
\widehat{g}_{23}\right) ^{2}\right) ^{\frac{1}{2}}X^{3}+\left( \widehat{g}%
_{22}\widehat{g}_{34}-\widehat{g}_{23}\widehat{g}_{24}\right) \left(
\widehat{g}_{22}\left( \widehat{g}_{22}\widehat{g}_{33}-\left( \widehat{g}%
_{23}\right) ^{2}\right) \right) ^{-\frac{1}{2}}X^{4}\right] ^{2} \\
+\left[ \left( \left( \widehat{g}_{22}\widehat{g}_{33}-\left( \widehat{g}%
_{23}\right) ^{2}\right) \left( \widehat{g}_{22}\widehat{g}_{44}-\left(
\widehat{g}_{24}\right) ^{2}\right) -\left( \widehat{g}_{22}\widehat{g}_{34}-%
\widehat{g}_{23}\widehat{g}_{24}\right) ^{2}\right) ^{\frac{1}{2}}\left(
\widehat{g}_{22}\left( \widehat{g}_{22}\widehat{g}_{33}-\left( \widehat{g}%
_{23}\right) ^{2}\right) \right) ^{-\frac{1}{2}}X^{4}\right] ^{2}.%
\end{array}
\tag{A.6}
\end{equation}%
By computing $\widehat{g}_{ij}$ via tensorial transformation formulae we
gain
\begin{equation*}
\left( \widehat{g}_{ij}\right) =\left(
\begin{array}{cccc}
\frac{1}{4}g_{11}+\frac{1}{2}g_{12}+\frac{1}{4}g_{22} & \frac{1}{4}g_{11}-%
\frac{1}{4}g_{22} & \frac{1}{2}\left( g_{13}+g_{23}\right) & \frac{1}{2}%
\left( g_{14}+g_{24}\right) \\
\frac{1}{4}g_{11}-\frac{1}{4}g_{22} & \frac{1}{4}g_{11}-\frac{1}{2}g_{12}+%
\frac{1}{4}g_{22} & \frac{1}{2}\left( g_{13}-g_{23}\right) & \frac{1}{2}%
\left( g_{14}-g_{24}\right) \\
\frac{1}{2}\left( g_{13}+g_{23}\right) & \frac{1}{2}\left(
g_{13}-g_{23}\right) & g_{33} & g_{34} \\
\frac{1}{2}\left( g_{14}+g_{24}\right) & \frac{1}{2}\left(
g_{14}-g_{24}\right) & g_{34} & g_{44}%
\end{array}%
\right) .
\end{equation*}%
From $\left( \ref{5.36a}\right) $ and\ $\left( \ref{5.36b}\right) $\ we get
\begin{equation}
\left( \widehat{g}_{ij}\right) =\left(
\begin{array}{cccc}
\frac{1}{2}g_{12} & 0 & 0 & 0 \\
0 & -\frac{1}{2}g_{12} & 0 & 0 \\
0 & 0 & g_{33} & g_{34} \\
0 & 0 & g_{34} & g_{44}%
\end{array}%
\right) \text{ on }G^{1}.  \tag{A.7}
\end{equation}%
Hence,\ in\ view\ of\ $\left( \ref{5.18}\right) $,\ we\ gain%
\begin{equation}
\left( \lambda _{B}^{A}\right) =\left(
\begin{array}{ccc}
\left( -\frac{1}{2}g_{12}\right) ^{\frac{1}{2}} & 0 & 0 \\
0 & \left( \Omega h_{33}\right) ^{\frac{1}{2}} & \Omega h_{34} \\
0 & 0 & \Omega ^{\frac{1}{2}}h_{33}^{-\frac{1}{2}}%
\end{array}%
\right) \text{ on }G^{1}.  \tag{A.8}
\end{equation}%
Thus, the inverse matrix of $\left( \lambda _{B}^{A}\right) $ reads%
\begin{equation}
\left( \lambda _{B}^{A}\right) ^{-1}=\left(
\begin{array}{ccc}
\left( -\frac{1}{2}g_{12}\right) ^{-\frac{1}{2}} & 0 & 0 \\
0 & \left( \Omega h_{33}\right) ^{-\frac{1}{2}} & -h_{34} \\
0 & 0 & \Omega ^{-\frac{1}{2}}h_{33}^{\frac{1}{2}}%
\end{array}%
\right) \text{ on }G^{1}.  \tag{A.9}
\end{equation}%
Simple calculation gives
\begin{equation}
\widehat{g}^{11}=2\left( g_{12}\right) ^{-1}\text{ on }G^{1}.  \tag{A.10}
\end{equation}%
It is worth noting from\ $\left( A.10\right) $\ that the condition $\widehat{%
g}^{11}<0$ is equivalent to $g_{12}<0$ on $G^{1}$. We also have
\begin{equation}
\widetilde{g}^{12}=0\text{ on }G^{1}.  \tag{A.11}
\end{equation}%
$\left( A.5\right) $, $\left( A.9\right) $, $\left( A.10\right) $\ and $%
\left( A.11\right) $\ imply%
\begin{equation}
w^{2}=v^{2}\text{ on }G^{1}.  \tag{A.12}
\end{equation}%
It follows from $\left( A.4\right) $ and\ $\left( A.12\right) $\ that
\begin{equation}
v_{22}=\frac{1}{4}\left( g_{12}\right) ^{2}\left( 1+v^{2}\right) ^{2}\text{
on }G^{1}.  \tag{A.13}
\end{equation}%
We now handle the term $\left\vert g\right\vert ^{\frac{1}{2}}\left( -%
\widehat{g}^{11}\right) ^{-\frac{3}{2}}\left\vert \widetilde{g}\right\vert
^{-\frac{1}{2}}$ on $G^{1}$. From $\left( \ref{5.36a}\right) $ and $\left( %
\ref{5.36b}\right) $ we gain
\begin{equation}
\det \left( g_{ij}\right) =-\left( \Omega g_{12}\right) ^{2}\text{ on }G^{1}.
\tag{A.14}
\end{equation}%
Thus
\begin{equation}
\left\vert g\right\vert ^{\frac{1}{2}}=\Omega \left\vert g_{12}\right\vert
=-\Omega g_{12}\text{ on }G^{1}.  \tag{A.15}
\end{equation}%
Using $\left( \ref{5.36a}\right) $, $\left( \ref{5.36b}\right) $ and
tensorial transformation formulae we gain
\begin{equation*}
\widehat{g}_{22}=-\frac{1}{2}g_{12},\quad \widehat{g}_{2\alpha }=0,\text{%
\quad }\widehat{g}_{\alpha \beta }=\Omega h_{\alpha \beta }\text{ on }G^{1}.
\end{equation*}%
Hence
\begin{equation}
\widetilde{g}=\det \left( \widehat{g}_{AB}\right) =-\frac{1}{2}g_{12}\Omega
^{2}\text{ on }G^{1}.  \tag{A.16}
\end{equation}%
From $\left( A.10\right) $, $\left( A.15\right) $\ and $\left( A.16\right) $%
\ we gain%
\begin{equation}
\left\vert g\right\vert ^{\frac{1}{2}}\left( -\widehat{g}^{11}\right) ^{-%
\frac{3}{2}}\left\vert \widetilde{g}\right\vert ^{-\frac{1}{2}}=\frac{1}{2}%
\left( g_{12}\right) ^{2}\text{ on }G^{1}.  \tag{A.17}
\end{equation}%
The insertion of\ $\left( A.13\right) $\ and $\left( A.17\right) $ into $%
\left( A.2\right) \ $gives the desired expression of $T_{22}\left( x\right) $%
.

\section*{Appendix B: Proof of Proposition 7}

$\left( \ref{5.45}\right) $\ follows from relation $\left( 7.50\right) $ of
\cite{10} by using $g_{1\lambda }=0$ on $G^{1}$. $T_{2\alpha }\left(
x\right) $ is\ computed\ in\ the\ same\ manner as\ $T_{22}\left( x\right) $.
Actually, for $i=2$ and $j=\alpha $, $\left( A.1\right) $ reads
\begin{equation}
T_{2\alpha }\left( x\right) =\frac{1}{2m^{2}}\int_{B}\varphi \left(
x,v\right) v_{2\alpha }\left\vert g\right\vert ^{\frac{1}{2}}\left( -%
\widehat{g}^{11}\right) ^{-\frac{3}{2}}\left\vert \widetilde{g}\right\vert
^{-\frac{1}{2}}d^{3}v.  \tag{B.1}
\end{equation}%
From $\left( \ref{5.36a}\right) $ and $\left( \ref{5.36b}\right) $\ we gain%
\begin{equation}
p_{\alpha }=q^{1}g_{\alpha \beta }w^{\beta }\text{ on }G^{1}.  \tag{B.2}
\end{equation}%
$\left( A.3\right) $ and $\left( B.2\right) $\ yields
\begin{equation}
v_{2\alpha }=\frac{1}{2}g_{12}\left( 1+v^{2}\right) \left( g_{\alpha
3}w^{3}+g_{\alpha 4}w^{4}\right) \text{ on }G^{1}.  \tag{B.3}
\end{equation}%
From $\left( \ref{5.21}\right) $ and $\left( A.9\right) $ we gain
\begin{equation}
w^{3}=\frac{1}{\sqrt{2}}\left( -g_{12}\right) ^{\frac{1}{2}}\left[ \left(
\Omega h_{33}\right) ^{-\frac{1}{2}}v^{3}-h_{34}v^{4}\right] ,\text{\quad }%
w^{4}=\frac{1}{\sqrt{2}}\left( -g_{12}\right) ^{\frac{1}{2}}\Omega ^{-\frac{1%
}{2}}\left( h_{33}\right) ^{\frac{1}{2}}v^{4}.  \tag{B.4}
\end{equation}%
$\left( B.3\right) $ and $\left( B.4\right) $ imply%
\begin{equation}
v_{2\alpha }=-\frac{1}{2\sqrt{2}}\left( 1+v^{2}\right) \left( -g_{12}\right)
^{\frac{3}{2}}\Omega \left[ h_{\alpha 3}\left( \left( \Omega h_{33}\right)
^{-\frac{1}{2}}v^{3}-h_{34}v^{4}\right) +h_{\alpha 4}\Omega ^{-\frac{1}{2}%
}\left( h_{33}\right) ^{\frac{1}{2}}v^{4}\right] \text{ on }G^{1}.  \tag{B.5}
\end{equation}%
Insertion$\ $of $\left( A.17\right) $ and $\left( B.5\right) $ into $\left(
B.1\right) \ $gives the expression $\left( \ref{5.46}\right) $ of $%
T_{2\alpha }\left( x\right) $.

\section*{Appendix C: Proof of Proposition 9}

$T_{\alpha \beta }\left( x\right) $ is handled in the same way as $T_{22}$
and $T_{2\alpha }$. From $\left( A.1\right) $ we have
\begin{equation}
T_{\alpha \beta }\left( x\right) =\frac{1}{2m^{2}}\int_{B}\varphi \left(
x,v\right) v_{\alpha \beta }\left\vert g\right\vert ^{\frac{1}{2}}\left( -%
\widehat{g}^{11}\right) ^{-\frac{3}{2}}\left\vert \widetilde{g}\right\vert
^{-\frac{1}{2}}d^{3}v.  \tag{C.1}
\end{equation}%
$\left( B.2\right) $ implies
\begin{equation}
v_{\alpha \beta }=g_{\alpha \lambda }g_{\beta \mu }w^{\lambda }w^{\mu }\text{
on }G^{1}.  \tag{C.2}
\end{equation}%
Insertion of $\left( A.17\right) $\ and $\left( C.2\right) $\ into $\left(
C.1\right) $\ gives
\begin{equation*}
T_{\alpha \beta }\left( x\right) =\frac{\left( g_{12}\right) ^{2}g_{\alpha
\lambda }g_{\beta \mu }}{4m^{2}}\int_{B}\varphi \left( x,v\right) w^{\lambda
}w^{\mu }d^{3}v.
\end{equation*}%
Thus%
\begin{equation}
\begin{array}{l}
g^{\alpha \beta }T_{\alpha \beta }\left( x\right) =\frac{\left(
g_{12}\right) ^{2}\Omega h_{33}}{4m^{2}}\int_{B}\varphi \left( x,v\right)
\left( w^{3}\right) ^{2}d^{3}v+\frac{\left( g_{12}\right) ^{2}\Omega h_{34}}{%
2m^{2}}\int_{B}\varphi \left( x,v\right) w^{3}w^{4}d^{3}v \\
\text{ \ \ \ \ \ \ \ \ \ \ \ \ \ \ \ \ \ \ }+\frac{\left( g_{12}\right)
^{2}\Omega h_{44}}{4m^{2}}\int_{B}\varphi \left( x,v\right) \left(
w^{4}\right) ^{2}d^{3}v.%
\end{array}
\tag{C.3}
\end{equation}%
The insertion of $\left( B.4\right) $\ into $\left( C.3\right) $\ gives the
expression $\left( \ref{5.58}\right) $ of $g^{\alpha \beta }T_{\alpha \beta
}\left( x\right) $ on $G^{1}$.

\section*{Appendix D: Discussion on the integral constraints equations}

\subsection*{The integral system $\left( IS\right) $}

The integral system $\left( IS\right) $ is written as follows%
\begin{equation}
\langle \varphi _{x},f\rangle =h_{1}\left( x\right) ,\quad \langle \varphi
_{x},g\rangle =h_{2}\left( x\right) .  \tag{D.1}
\end{equation}%
where $\langle ,\rangle $ denotes the scalar product in $L^{2}\left(
B\right) $ and%
\begin{equation}
\begin{array}{l}
f\left( v\right) =\left( 1+v^{2}\right) v^{3},\text{\quad }g\left( v\right)
=\left( 1+v^{2}\right) v^{4}, \\
h_{1}=\frac{128m^{4}}{\left( -g_{12}\right) ^{7}\Omega }\left( \psi
_{3}B_{4}-\psi _{4}B_{3}\right) ,\text{\quad }h_{2}=\frac{128m^{4}}{\left(
-g_{12}\right) ^{7}\Omega }\left( \psi _{4}A_{3}-\psi _{3}A_{4}\right) .%
\end{array}
\tag{D.2}
\end{equation}%
As the distribution function must be non-negative, let us seek $\varphi $ of
the form
\begin{equation}
\varphi _{x}\left( v\right) \equiv \varphi \left( x,v\right) =\left[ a\left(
x\right) f\left( v\right) +b\left( x\right) g\left( v\right) +c\left(
x\right) \right] ^{2},  \tag{D.3}
\end{equation}%
where $a\left( x\right) ,$ $b\left( x\right) $ and $c\left( x\right) $ are
unknown functions defined on $G^{1}$. Expanding $\left( D.3\right) $, we
have
\begin{equation}
\varphi =a^{2}f^{2}+b^{2}g^{2}+2abfg+2acf+2bcg+c^{2},  \tag{D.4}
\end{equation}%
where variables have been dropped for simplicity. Let $\varphi $ be like in $%
\left( D.4\right) $. It holds that%
\begin{equation}
\begin{array}{l}
\langle \varphi _{x},f\rangle =a^{2}\langle f^{2},f\rangle +b^{2}\langle
g^{2},f\rangle +2ab\langle fg,f\rangle +2ac\langle f,f\rangle +2bc\langle
g,f\rangle +c^{2}\langle 1,f\rangle , \\
\langle \varphi _{x},g\rangle =a^{2}\langle f^{2},g\rangle +b^{2}\langle
g^{2},g\rangle +2ab\langle fg,g\rangle +2ac\langle f,g\rangle +2bc\langle
g,g\rangle +c^{2}\langle 1,g\rangle .%
\end{array}
\tag{D.5}
\end{equation}%
Spherical coordinates will be used to calculate each of the following
quantities that are needed.%
\begin{equation}
\begin{array}{l}
\langle f^{2},f\rangle =\int_{B}\left[ \left( 1+v^{2}\right) v^{3}\right]
^{3}d^{3}v,\text{\quad }\langle g^{2},f\rangle =\int_{B}\left[ \left(
1+v^{2}\right) v^{4}\right] ^{2}\left( 1+v^{2}\right) v^{3}d^{3}v, \\
\langle fg,f\rangle =\int_{B}\left[ \left( 1+v^{2}\right) v^{3}\right]
^{2}\left( 1+v^{2}\right) v^{4}d^{3}v,\text{\quad }\langle f,f\rangle
=\int_{B}\left[ \left( 1+v^{2}\right) v^{3}\right] ^{2}d^{3}v, \\
\langle g,f\rangle =\int_{B}\left( 1+v^{2}\right) ^{2}v^{3}v^{4}d^{3}v,\text{%
\quad }\langle 1,f\rangle =\int_{B}\left( 1+v^{2}\right) v^{3}d^{3}v,\text{ }
\\
\langle 1,g\rangle =\int_{B}\left( 1+v^{2}\right) v^{4}d^{3}v,\text{\quad }%
\langle g^{2},g\rangle =\int_{B}\left[ \left( 1+v^{2}\right) v^{4}\right]
^{3}d^{3}v,\text{ } \\
\langle g,g\rangle =\int_{B}\left[ \left( 1+v^{2}\right) v^{4}\right]
^{2}d^{3}v.%
\end{array}
\tag{D.6}
\end{equation}%
As $B$ is the open unit ball in $\mathbb{R}^{3}$, we\ set%
\begin{equation}
v^{2}=r\cos \theta \cos \lambda ,\quad v^{3}=r\cos \theta \sin \lambda
,\quad v^{4}=r\sin \theta ,  \tag{D.7}
\end{equation}%
with%
\begin{equation}
0\leq r<1,\quad -\frac{\pi }{2}\leq \theta \leq \frac{\pi }{2},\quad 0\leq
\lambda \leq 2\pi .  \tag{D.8}
\end{equation}%
Define a domain $P$ in $\mathbb{R}^{3}$ as follows%
\begin{equation}
P=\left\{ \left( r,\theta ,\lambda \right) \in \mathbb{R}^{3}/\text{ }0\leq
r<1,\text{ }-\frac{\pi }{2}\leq \theta \leq \frac{\pi }{2},\text{ }0\leq
\lambda \leq 2\pi \right\} .  \tag{D.9}
\end{equation}%
Using the change of variables $\left( D.7\right) $, we gain
\begin{equation}
\begin{array}{l}
\langle f^{2},f\rangle =\int_{P}f^{3}r^{2}\cos \theta dP,\text{\quad }%
\langle g^{2},f\rangle =\int_{P}g^{2}fr^{2}\cos \theta dP, \\
\langle fg,f\rangle =\int_{P}f^{2}gr^{2}\cos \theta dP,\text{\quad }\langle
f,f\rangle =\int_{P}f^{2}r^{2}\cos \theta dP, \\
\langle g,f\rangle =\int_{P}gfr^{2}\cos \theta dP,\text{\quad }\langle
1,f\rangle =\int_{P}fr^{2}\cos \theta dP, \\
\langle 1,g\rangle =\int_{P}gr^{2}\cos \theta dP,\text{\quad }\langle
g^{2},g\rangle =\int_{P}g^{3}r^{2}\cos \theta dP, \\
\langle g,g\rangle =\int_{P}g^{2}r^{2}\cos \theta dP,%
\end{array}
\tag{D.10}
\end{equation}%
where\
\begin{equation*}
dP=drd\theta d\lambda .
\end{equation*}%
After expansion and reduction we integrate the above quantities over $P$ to
obtain%
\begin{equation}
\begin{array}{l}
\langle f^{2},f\rangle =\langle g^{2},f\rangle =\langle f^{2},g\rangle
=\langle g,f\rangle =\langle 1,f\rangle =\langle 1,g\rangle =\langle
g^{2},g\rangle =0, \\
\langle f,f\rangle =\langle g,g\rangle =\frac{32\pi }{105},%
\end{array}
\tag{D.11}
\end{equation}%
From $\left( D.5\right) $ and $\left( D.11\right) $ it holds that%
\begin{equation}
\langle \varphi _{x},f\rangle =\frac{64\pi a\left( x\right) c\left( x\right)
}{105},\quad \langle \varphi _{x},g\rangle =\frac{64\pi b\left( x\right)
c\left( x\right) }{105}.  \tag{D.12}
\end{equation}%
Thus, $\varphi $ solves $\left( D.1\right) $ if and only if%
\begin{equation}
\frac{64\pi a\left( x\right) c\left( x\right) }{105}=h_{1}\left( x\right)
,\quad \frac{64\pi b\left( x\right) c\left( x\right) }{105}=h_{2}\left(
x\right) .  \tag{D.13}
\end{equation}%
In sum $\varphi $ is given by%
\begin{equation}
\begin{array}{l}
\varphi _{x}\left( v\right) \equiv \varphi \left( x,v\right) =\left[ a\left(
x\right) \left( 1+v^{2}\right) v^{3}+b\left( x\right) \left( 1+v^{2}\right)
v^{4}+c\left( x\right) \right] ^{2}, \\
x\in G^{1},\text{ }v=\left( v^{2},v^{3},v^{4}\right) \in B,%
\end{array}
\tag{D.14}
\end{equation}%
where%
\begin{equation}
a\left( x\right) c\left( x\right) =\frac{105h_{1}\left( x\right) }{64\pi }%
,\quad b\left( x\right) c\left( x\right) =\frac{105h_{2}\left( x\right) }{%
64\pi }.  \tag{D.15}
\end{equation}

\subsection*{The integral equation $\left( IE\right) $}

The integral equation $\left( IE\right) $\ is written as follows%
\begin{equation}
E\langle \varphi _{x},e\rangle +I\langle \varphi _{x},i\rangle +S\langle
\varphi _{x},s\rangle =C,  \tag{D.16}
\end{equation}%
where%
\begin{equation}
\begin{array}{l}
E\left( x\right) =\frac{\left( -g_{12}\right) ^{3}}{8m^{2}},\text{\quad }%
e\left( v\right) =\left( v^{3}\right) ^{2}, \\
I\left( x\right) =\frac{\left( -g_{12}\right) ^{3}h_{34}}{4m^{2}}\left(
1-\left( \Omega h_{33}\right) ^{\frac{1}{2}}\right) ,\text{\quad }i\left(
v\right) =v^{3}v^{4}, \\
S\left( x\right) =\frac{\left( -g_{12}\right) ^{3}}{8m^{2}}\left[ \left(
h_{34}\right) ^{2}\left( \Omega h_{33}\right) ^{\frac{1}{2}}\left( \left(
\Omega h_{33}\right) ^{\frac{1}{2}}-2\right) +h_{44}h_{33}\right] , \\
s\left( v\right) =\left( v^{4}\right) ^{2},\text{\quad }C\left( x\right) =%
\frac{1}{4}g^{\alpha \beta }\left( N_{\alpha \beta }+M_{\alpha \beta
}\right) .%
\end{array}
\tag{D.17}
\end{equation}%
Using the expression of $\varphi $ given\ in\ $\left( D.4\right) $ we gain
\begin{equation}
\langle \varphi _{x},e\rangle =a^{2}\langle f^{2},e\rangle +b^{2}\langle
g^{2},e\rangle +2ab\langle fg,e\rangle +2ac\langle f,e\rangle +2bc\langle
g,e\rangle +c^{2}\langle 1,e\rangle .  \tag{D.18}
\end{equation}%
As in the preceding paragraph, the above quantities are found to be%
\begin{equation}
\begin{array}{c}
\langle f^{2},e\rangle =\frac{8\pi }{63},\quad \langle g^{2},e\rangle =\frac{%
8\pi }{189},\quad \langle 1,e\rangle =\frac{4\pi }{15},\quad \langle
fg,e\rangle =\langle f,e\rangle =\langle g,e\rangle =0.%
\end{array}
\tag{D.19}
\end{equation}%
$\left( D.18\right) $ and $\left( D.19\right) $ imply%
\begin{equation}
\langle \varphi _{x},e\rangle =\frac{8\pi }{63}a^{2}+\frac{8\pi }{189}b^{2}+%
\frac{4\pi }{15}c^{2}.  \tag{D.20}
\end{equation}%
Similarly it holds that%
\begin{equation}
\langle \varphi _{x},i\rangle =a^{2}\langle f^{2},i\rangle +b^{2}\langle
g^{2},i\rangle +2ab\langle fg,i\rangle +2ac\langle f,i\rangle +2bc\langle
g,i\rangle +c^{2}\langle 1,i\rangle .  \tag{D.21}
\end{equation}%
Straightforward calculations as above give%
\begin{equation}
\langle f^{2},i\rangle =\langle g^{2},i\rangle =\langle f,i\rangle =\langle
g,i\rangle =\langle 1,i\rangle =0,\quad \langle fg,i\rangle =\frac{8\pi }{189%
}  \tag{D.22}
\end{equation}%
$\left( D.21\right) $ and $\left( D.22\right) $ give%
\begin{equation}
\langle \varphi _{x},i\rangle =\frac{16\pi ab}{189}.  \tag{D.23}
\end{equation}%
We are then left with calculating\
\begin{equation}
\langle \varphi _{x},s\rangle =a^{2}\langle f^{2},s\rangle +b^{2}\langle
g^{2},s\rangle +2ab\langle fg,s\rangle +2ac\langle f,s\rangle +2bc\langle
g,s\rangle +c^{2}\langle 1,s\rangle .  \tag{D.24}
\end{equation}%
By proceeding as above we get%
\begin{equation}
\langle fg,s\rangle =\langle f,s\rangle =\langle g,s\rangle =0,\quad \langle
f^{2},s\rangle =\frac{8\pi }{189},\quad \langle g^{2},s\rangle =\frac{8\pi }{%
63},\quad \langle 1,s\rangle =\frac{4\pi }{15}.  \tag{D.25}
\end{equation}%
$\left( D.24\right) $ and $\left( D.25\right) $ yield
\begin{equation}
\langle \varphi _{x},s\rangle =\frac{8\pi }{189}a^{2}+\frac{8\pi }{63}b^{2}+%
\frac{4\pi }{15}c^{2}.  \tag{D.26}
\end{equation}%
From $\left( D.16\right) ,$ $\left( D.20\right) ,$ $\left( D.23\right) $ and
$\left( D.26\right) $,\ we see that the integral equation $\left( IE\right) $%
\ is equivalent to%
\begin{equation}
10\left( 3E+S\right) a^{2}+10\left( 3S+E\right) b^{2}+20Iab+63\left(
E+S\right) c^{2}=\frac{945}{4\pi }C.  \tag{D.27}
\end{equation}%
In view of $\left( D.15\right) $,\ multiplying $\left( D.27\right) $ by $%
c^{2}$ and\ rearranging\ , we gain
\begin{equation}
\begin{array}{l}
110\,250\left( 3E\left( x\right) +S\left( x\right) \right) \left(
h_{1}\left( x\right) \right) ^{2}+110\,250\left( 3S\left( x\right) +E\left(
x\right) \right) \left( h_{2}\left( x\right) \right) ^{2} \\
+220\,500h_{1}\left( x\right) h_{2}\left( x\right) I\left( x\right)
+258\,048\pi ^{2}\left( E\left( x\right) +S\left( x\right) \right) \left[
c\left( x\right) \right] ^{4} \\
=967\,680\pi C\left( x\right) \left[ c\left( x\right) \right] ^{2}.%
\end{array}
\tag{D.28}
\end{equation}%
$\left( D.28\right) $ is an algebraic equation that can be solved under
suitable assumptions to find $c\left( x\right) $. Doing so we deduce $%
a\left( x\right) $ and $b\left( x\right) $ thanks to $\left( D.15\right) $.
Finally the distribution function $\varphi $ is obtained on $\widehat{G^{1}}$
and has the form $\left( D.14\right) $.\medskip

\textbf{Acknowledgement. }I am thankful to Professor Marcel Dossa who
suggested this topic and guided me throughout the work.

\end{document}